\newcommand{\defemph}[1]{\textbf{\emph{#1}}}
\newcommand{\logdcfl}{\mathsf{logDCFL}\xspace}
\newcommand{\acc}[1]{\mathsf{AC}^{#1}\xspace}
\newcommand{\ncc}[1]{\mathsf{NC}^{#1}\xspace}
\newcommand{\acd}{\mathsf{AC}^1[\mathsf{logDCFL}]\xspace}
\newcommand{\aconec}{\mathsf{AC}^1[C]\xspace}
\newcommand{\setsep}{\; | \;}
\newcommand{\fcontext}{formula context\xspace}
\newcommand{\fcontexs}{formula contexts\xspace}
\newcommand{\ap}{\textrm{AP}}
\newcommand{\ptime}{\mathsf{P}}
\newcommand{\orc}{\textrm{OR}}
\newcommand{\andc}{\textrm{AND}}
\newcommand{\notc}{\textrm{NOT}}
\newcommand{\idc}{\textrm{ID}}
\newcommand{\onec}{\textrm{ONE}}
\newcommand{\zeroc}{\textrm{ZERO}}
\newcommand{\logspace}{\textrm{L}}
\newcommand{\nn}{\mathbb{N}}
\newcommand{\rr}{\mathbb{R}}
\newcommand{\bb}{\mathbb{B}}
\newcommand{\mm}{\mathbb{M}}
\newcommand*{\ltl}{{\mathsf{LTL}}\xspace}
\newcommand*{\ltlxor}{{\mathsf{LTL+Xor}}\xspace}
\newcommand*{\mtl}{{\mathsf{MTL}}\xspace}
\newcommand*{\utl}{{\mathsf{UTL}}\xspace}
\newcommand*{\G}{\mathbin{\mathrm{G}}}
\newcommand*{\F}{\mathbin{\mathrm{F}}}
\newcommand*{\A}{\mathbin{\mathrm{A}}}
\newcommand*{\V}{\mathbin{\mathrm{V}}}
\newcommand*{\X}{\mathbin{\mathrm{X}}}
\newcommand*{\Y}{\mathbin{\mathrm{Y}}}
\newcommand*{\U}{\mathbin{\mathrm{U}}}
\newcommand*{\R}{\mathbin{\mathrm{R}}}
\newcommand*{\since}{\mathbin{\mathrm{S}}}
\newcommand*{\T}{\mathbin{\mathrm{T}}}
\newcommand{\todo}[1]{}
\newcommand{\nop}[1]{}
\begin{document}

\title{On the Complexity of Temporal-Logic\\ Path Checking\thanks{This is the full version of the paper with the same title from ICALP'14.}}
\author{Daniel Bundala \and Jo\"{e}l Ouaknine}
\institute{Department of Computer Science, University of Oxford \\ Wolfson Building, Parks Road, Oxford, OX1 3QD, UK }
\maketitle

\begin{abstract}
Given a formula in a temporal logic such as $\ltl$ or $\mtl$, a
fundamental problem is the complexity of evaluating the formula on a
given finite word. For $\ltl$, the complexity of this task was
recently shown to be in $\ncc{}$~\cite{Kuhtz:2009}. In this paper, we
present an $\ncc{}$ algorithm for $\mtl$, a quantitative (or metric)
extension of $\ltl$, and give an $\acc{1}$ algorithm for $\utl$, the
unary fragment of $\ltl$. At the time of writing, $\mtl$ is the most
expressive logic with an $\ncc{}$ path-checking algorithm, and $\utl$
is the most expressive fragment of $\ltl$ with a more efficient
path-checking algorithm than for full $\ltl$ (subject to standard
complexity-theoretic assumptions). We then establish a connection
between $\ltl$ path checking and planar circuits, which we exploit
to show that any further progress in determining the precise
complexity of $\ltl$ path checking would immediately entail more
efficient evaluation algorithms than are known for a certain class of
planar circuits.  The connection further implies that the complexity
of $\ltl$ path checking depends on the Boolean connectives
allowed: adding Boolean exclusive or yields a temporal logic with
$\ptime$-complete path-checking problem.

\end{abstract}

\section{Introduction}

One of the most fundamental problems in the fields of testing and
verification is the \emph{path-checking problem}: determine whether a
given observation\footnote{In this paper, all observations (paths,
  traces, words, etc.)\ considered are finite.} of a system satisfies
a given specification drawn from a fixed ambient logic. The complexity
of this problem plays a key role in the design and analysis of offline
monitoring and runtime verification
procedures~\cite{Finkbeiner:2004,Maler:2004}. The path-checking
problem also appears in testing~\cite{Artho:2005} and in
Monte-Carlo-based probabilistic verification~\cite{Younes:2002}.

Although the problem is simply stated, determining its precise
complexity can prove to be quite challenging. The case of $\ltl$ was
investigated more than a decade ago~\cite{Demri:2002,Markey:03}, and at the time
is was conjectured that the straightforward polynomial-time
dynamic-programming algorithm is not optimal.\footnote{The best known
  lower bound for $\ltl$ path checking is $\ncc{1}$, which crudely
  arises from the $\ncc{1}$-hardness of mere Boolean formula
  evaluation.} And indeed, using reductions to planar circuits and
tree-contraction algorithms, it was recently proved~\cite{Kuhtz:2009}
that $\ltl$ path checking allows an efficient parallel algorithm and
lies in $\ncc{}$---in fact, in $\acd$. (This seminal result was
rewarded by the ICALP~2009 best-paper award.)  More recently, this
work was extended to a very restricted metric extension of $\ltl$, in
which only temporal operators of the form $\U_{\leq b}$ are
allowed~\cite{Kuhtz:2012}.  

In this paper, we give an algorithm for full Metric Temporal Logic
($\mtl$) with the same complexity---$\acd$---
known algorithm for~$\ltl$. \nop{Our main technical contribution is a
  construction for $\U_I$ operators for an arbitrary interval
  $I$.}

We reprise the strategy, introduced in~\cite{Kuhtz:2009}, to represent
temporal operators using a special class of planar monotone circuits,
together with a generic algorithm~\cite{Chakraborty:2006} as a
subroutine to evaluate those circuits. \nop{This yields an algorithm
  in the $\acc{}$ hierarchy, precisely in $\acd$ (which is contained
  in $\acc{2}$). }Such circuits have a very special form, which led
the authors of~\cite{Kuhtz:2009} to ask whether the complexity of the
path-checking algorithm can be improved by devising specialised
circuit-evaluation algorithms. In this paper, we present evidence to
the contrary, by showing that the evaluation of circuits drawn from a
class of planar circuits studied in~\cite{Limaye:2006}
is reducible to $\ltl$ path checking; any further progress in
determining the precise complexity of the latter would therefore
immediately entail more efficient evaluation algorithms than are known
for this class of planar circuits. It is worth pointing out that
augmenting this class of planar circuits with $\notc$ gates makes the
evaluation problem $\ptime$-complete~\cite{Goldschlager:1977}. It
follows that the complexity of path checking is sensitive to
non-monotone connectives, as allowing Boolean exclusive-or in formulae
enables the evaluation of circuits from this augmented class, and is
therefore itself $\ptime$-complete.

An examination of the algorithmic constructions of~\cite{Kuhtz:2009}
shows that the most intricate parts arise in handling the Until
operator. In this paper, we show that the removal of binary operators
from the logic, yielding Unary Temporal Logic ($\utl$), leads to a
much simpler path-checking problem, enabling us to devise an $\acc{1}$
algorithm for $\utl$ path checking.

At the time of writing, our results provide (i)~the most expressive
known extension of $\ltl$ with an $\ncc{}$ path-checking algorithm
($\mtl$), (ii)~the simplest known extension of $\ltl$ with a strictly
harder path-checking problem ($\ltlxor$), and (iii)~the most
expressive known fragment of $\ltl$ with a strictly more efficient
path-checking algorithm than for full $\ltl$
($\utl$).\footnote{Subject to standard complexity-theoretic
  assumptions.}

\section{Preliminaries}
We denote Boolean true and false by $\top$ and $\bot$, respectively.
The set $\{ \bot, \top \}$ is denoted by $\bb$. A vector
$v\in\bb^n$ is \defemph{downward monotone} if $v(i+1)=\top \implies
v(i)=\top$. It is \defemph{upward monotone} if $v(i-1)=\top \implies
v(i)=\top$. A vector is \defemph{monotone} if it is upward or downward
monotone. The set of monotone vectors is denoted by $\mm$.

\textbf{Temporal Logics: } Let $\ap$ be a set of atomic propositions,
$p \in \ap$ and $I \subseteq \rr_{\geq 0}$ be an interval with
endpoints in $\nn \cup \{\infty\}$. The formulae of \defemph{Metric
  Temporal Logic} ($\mtl$) are defined recursively as
follows.
$$\varphi = p \setsep \neg p \setsep \varphi \wedge \varphi \setsep \varphi \vee \varphi \setsep \X_I \varphi \setsep \Y_I \varphi \setsep \varphi \U_I \varphi \setsep \varphi \since_I \varphi \setsep \varphi \R_I \varphi \setsep \varphi \T_I \varphi$$

All logics and results presented in this paper apply to temporal logics with past temporal operators. Note that negation is applied only to atomic
propositions. Other operators are expressible using the following
semantic equalities: $\F_I \varphi = \top \U_I \varphi, \; \G_I \varphi =
\neg \F_I \neg \varphi, \; \varphi \R_I \psi = \neg (\neg \varphi \U_I
\neg \psi)$ and $\varphi \T_I \psi = \neg (\neg \varphi \since_I \neg
\psi)$.  \defemph{Linear Temporal Logic} ($\ltl$) is the
subset of $\mtl$ in which $I$ is always $[0, \infty)$ (and is
  omitted). 
The fragment $\utl$ of $\ltl$ consists of all Boolean connectives and unary ($\X, \F, \G$) temporal operators and their past duals.

A \defemph{trace} $\pi$ over $\ap$ of length $n$ is a function
$\pi:\{1,\ldots,n\}\times\ap\to\bb$ assigning a truth value to every $p
\in \ap$ at every index. We identify $p \in \ap$
with a vector in $\bb^n$ and use $p(i) = \top$ if
$\pi(i,p)=\top$. The proposition that is true only in the interval $[i, j]$ and false otherwise is denoted by $\chi_{i,j}$, i.e., $\chi_{i,j}(k) = \top$ if $i \leq k \leq j$ and $\chi_{i,j}(k) = \bot$
otherwise. To evaluate $\mtl$ formulae on $\pi$, we further associate with $\pi$ a sequence of strictly-increasing \defemph{timestamps} $t_1 < \ldots < t_n$.

Given an $\mtl$ formula $\varphi$ and
index $1 \leq i \leq n$, the satisfaction relation $\pi, i \models
\varphi$ is defined recursively as follows.
\begin{equation*}
\begin{array}{ll}
\pi,i\models p & \text{ if } p(i) = \top \\
\pi,i\models \varphi_1 \wedge \varphi_2 & \text{ if }\pi,i\models\varphi_1 \text{ and }\pi,i\models\varphi_2 \\
\pi,i\models \varphi_1 \vee \varphi_2 & \text{ if }\pi,i\models\varphi_1 \text{ or }\pi,i\models\varphi_2 \\
\pi,i\models X_I \varphi & \text{ if } i+1<n \wedge t_{i+1}-t_i\in I \wedge \pi,i+1\models\varphi \\
\pi,i\models Y_I \varphi & \text{ if } i>1 \text{ and } t_i-t_{i-1}\in I \text{ and }\pi, i-1\models \varphi \\
\pi,i\models \varphi_1 U_I \varphi_2 &  \text{ if }\exists j \;. \; (i\leq j \leq n) \wedge
                                                  \left(\begin{array}{l}
                                                    \pi, j\models \varphi_2\\
                                                    t_j-t_i\in I \\
                                                    \forall k \;. \; i\leq k<j \implies \pi, k\models \varphi_1 \\
                                                  \end{array} \right) \\
\pi,i\models \varphi_1 S_I \varphi_2 & \text{ if }\exists j \; . \; (i\geq j\geq 1) \wedge
                                                  \left(\begin{array}{l}
                                                    \pi, j\models \varphi_2 \\
                                                    t_i-t_j\in I \\
                                                    \forall k \; . \; i\geq k>j \implies \pi, k\models \varphi_1 \\
                                                  \end{array} \right) \\
\end{array}
\end{equation*}

This paper studies the complexity of evaluating a given formula on a given~trace.


\begin{definition}
The \defemph{path-checking problem} for logic $\mathcal{L}$ is to
determine, given a trace $\pi$ and a formula $\varphi$ of
$\mathcal{L}$, whether $\pi,1\models\varphi$.
\end{definition}

Let $\varphi$ be an $\mtl$ formula. Working from the smallest subformulae and using the above definitions to tabulate the values $\pi,i\models\psi$ for every $i$ and subformula $\psi$ yields a polynomial dynamic-programming algorithm evaluating $\varphi$ on~$\pi$.

\nop{By starting from the
smallest subformulae and keeping track of values $\pi,i\models\psi$
for every subformula $\psi$ of $\varphi$ and every index, it is
possible to evaluate $\varphi$ on $\pi$ in polynomial time.
}
\begin{theorem}[\cite{Markey:03}]
The path-checking problem for $\mtl$ is in~$\ptime$.
\end{theorem}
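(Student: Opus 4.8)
The plan is to establish membership in $\ptime$ by exhibiting the standard bottom-up dynamic-programming algorithm that the text already alludes to, and then bounding its running time polynomially in the size of the input (the formula $\varphi$ together with the trace $\pi$ and its timestamps $t_1 < \ldots < t_n$). First I would fix the input: let $n$ be the length of the trace and let $m$ be the number of subformulae of $\varphi$, noting that $m$ is bounded by the syntactic length of $\varphi$ and that the subformulae can be enumerated in an order compatible with the subformula relation (e.g.\ by a postorder traversal of the formula's parse tree), so that whenever we process a formula $\psi$ all of its immediate subformulae have already been processed.

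The core of the argument is to maintain a Boolean table $T$ indexed by pairs $(\psi, i)$, where $\psi$ ranges over subformulae of $\varphi$ and $1 \le i \le n$, with the invariant that $T[\psi, i] = \top$ if and only if $\pi, i \models \psi$. I would fill this table in the subformula order just described, using the recursive clauses of the satisfaction relation as computation rules. For an atomic proposition $p$ or its negation $\neg p$, the entry is read directly off the trace. For the Boolean connectives $\wedge$ and $\vee$, each entry is a single Boolean combination of two already-computed entries. For the unary operators $\X_I$ and $\Y_I$, each entry $T[\psi,i]$ is determined by a constant amount of work: checking the index bound, testing the membership $t_{i+1}-t_i \in I$ (respectively $t_i - t_{i-1} \in I$), and consulting one previously-computed entry. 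The only clauses requiring more than constant work per entry are the binary temporal operators $\U_I$ and $\since_I$ (and, via the semantic equalities $\varphi \R_I \psi = \neg(\neg\varphi \U_I \neg\psi)$ and $\varphi \T_I \psi = \neg(\neg\varphi \since_I \neg\psi)$, their duals): by the definition of $\U_I$, computing $T[\varphi_1 \U_I \varphi_2, i]$ amounts to searching over all witnesses $j$ with $i \le j \le n$, checking $T[\varphi_2, j]$, the interval constraint $t_j - t_i \in I$, and the conjunction $\bigwedge_{i \le k < j} T[\varphi_1, k]$, each of which is a lookup in an already-completed row of the table.

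For the complexity bound I would count the work. The table has $m \cdot n$ entries. Each Boolean or unary entry costs $O(1)$ (plus the $O(1)$ cost of the arithmetic comparison $t_j - t_i \in I$, treating arithmetic on the timestamps as unit cost, or polynomial in the bit-length of the $t_i$ under a more careful accounting). Each $\U_I$ or $\since_I$ entry costs $O(n)$ in the naive scan, so a whole row for such an operator costs $O(n^2)$, and summing over all $m$ subformulae gives a total of $O(m \cdot n^2)$ table operations, which is polynomial in the input size; the duals add only a constant factor from the double negation. Finally I would extract the answer $T[\varphi, 1]$ from the table, matching the definition of the path-checking problem ($\pi, 1 \models \varphi$), to conclude that the problem lies in $\ptime$.

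I expect the main obstacle to be purely presentational rather than mathematical: since the result is a folklore polynomial-time bound attributed to \cite{Markey:03}, the substance lies in stating the invariant cleanly and verifying that the subformula ordering makes every right-hand side of the recursion available before it is needed, so that the induction on subformula structure goes through. A minor subtlety worth flagging is the treatment of timestamp arithmetic---whether the membership tests $t_j - t_i \in I$ are charged as unit-cost or as polynomial-in-bit-length operations---but either accounting keeps the running time polynomial, so it does not threaten the conclusion.
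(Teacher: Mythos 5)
Your proposal is correct and is exactly the approach the paper takes: the paper proves this theorem with a one-sentence sketch (``working from the smallest subformulae and tabulating the values $\pi,i\models\psi$ for every $i$ and subformula $\psi$ yields a polynomial dynamic-programming algorithm''), and your table $T[\psi,i]$ filled in subformula order, with the $O(m\cdot n^2)$ accounting and the treatment of $\R_I$/$\T_I$ via the stated dualities, is precisely that algorithm spelled out in full.
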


Given a trace $\pi$ and formula $\varphi$, we represent the value of
$\varphi$ on $\pi$ as the vector $v \in \bb^n$ such that $v(i) = \top$
if and only if $\pi, i \models \varphi$. We further represent
$\ltl$ temporal operators as functions over vectors written in
infix notation. For example, $\U : \bb^n \times \bb^n \to \bb^n$ is a
function such that $(p \U q)(i) = \top$ if and only if there is $i
\leq j \leq n$ such that $q(j) = \top$ and $p(k) = \top$ for all $i
\leq k < j$.

A \defemph{\fcontext} $\varphi(X)$ is a formula
with one occurrence of a proposition replaced by a variable $X$. If $\psi(X)$ is another \fcontext then $(\varphi \circ \psi)(X)$ is the context obtained by substituting $\psi(X)$ for $X$ in $\varphi(X)$. If $q \in \ap$ is a proposition then
$\varphi(q)$ is obtained by substituting $q$ for $X$. For example,
$((p \U X ) \circ (X \since q))(r) = (p \U (X \since q))(r) = p \U (r
\since q)$. Composing \fcontexs increases the size linearly as a \fcontext contain only one occurrence of $X$.


\textbf{Circuits:}
A \defemph{Boolean circuit} $(C, \delta)$ consists of a set of \defemph{gates} $C$ and a \defemph{predecessor} function $\delta : C \to \mathcal{P}(C)$. The type of a gate is either $\orc, \andc, \notc, \idc, \linebreak \onec$ or $\zeroc$. If $c$ is of type $\tau$ and $\delta(c) = \{c_1, \ldots, c_n\}$ then we write $c = (\tau, c_1, \ldots, c_n)$\nop{ where $\tau$ is the type of $c$ and $\{c_1, \ldots, c_n\} = \delta(c)$}. If $d \in \delta(c)$ then we say $c$ \defemph{depends} on $d$ or that there is a \defemph{wire} from $d$ to $c$. The $\onec$ and $\zeroc$ gates provide constants inputs. A gate is an \defemph{input} gate if it does not have a predecessor. A gate is an \defemph{output} gate if it is not a predecessor of any other gate.\nop{The depth of a circuit is the length of a longest path in the underlying DAG.} A circuit is \defemph{monotone} if it has no $\notc$ gates. It is \defemph{planar} if the underlying DAG is planar. In this paper, all edges (wires) are straight-line segment and so a \defemph{planar embedding} is induced by a function $\gamma : C \to \rr ^ 2$ assigning a point in the plane to every gate.

A circuit is \defemph{layered} if it can be partitioned into
\defemph{layers} $C_0,\ldots, C_n$ such that each wire goes from $C_i$
to $C_{i+1}$ for some $i$. Thus, $C_0$ contains only input gates. A
layered circuit is \defemph{stratified} if all input gates appear in
$C_0$. A circuit is \defemph{upward planar} if there is a planar
embedding such that every edge monotonically increases in the upward
direction---the direction of the evaluation of $C$. A circuit is
\defemph{upward layered (stratified)} if it is both upward planar and
layered (stratified). Each layer
$C_i$ of an upward-layered circuit consists of gates $\alpha_{i,j}$
in the left-to-right ordering. \nop{Planarity means that each}Each
$\alpha_{i,j}$ depends on a contiguous block
$\alpha_{i-1,l},\ldots,\alpha_{i-1,r}$ layer below and the
wires do not cross: if $\alpha_{i,j}$ depends on
$\alpha_{i-1,q}$ and $\alpha_{i,k}$ depends on $\alpha_{i-1,r}$ then
$j \leq k \iff q \leq r$. 
Fig.~\ref{fig:transducersuphicombined} shows upward stratified
monotone circuits.

Given a circuit with one output gate, the \defemph{circuit value
  problem}, abbreviated as \defemph{CVP}, is the problem of
determining the value of the output gate.

\textbf{Complexity Classes:} The class $\logdcfl$ consists of
problems that are logspace many-one reducible to deterministic
context-free languages. Equivalently, it is the class of problems
decidable by a deterministic logspace Turing machine equipped with a stack and
terminating in polynomial time. The circuit class $\acc{i}$ for
$i\in\nn$ consists of problems decidable by polynomial-size unbounded
fan-in circuits of depth $\log^i$. All circuits in this paper are
\defemph{uniform}---can be generated by a deterministic logspace
Turing machine. Given a problem $S$ and a complexity class $C$, we write $S \in \aconec$ if there is a family of
$\acc{1}$ circuits with additional unbounded fan-in $C$-oracle gates
that decide $S$. It is known that
\begin{equation*}
\logspace \subseteq \logdcfl \subseteq \acc{1} \subseteq \acd \subseteq
\acc{2} \subseteq \cdots \subseteq \acc{i} \subseteq \acc{i+1}
\subseteq \cdots \subseteq \ptime
\end{equation*}
 and that CVP for upward-stratified
circuits is $\ptime$-complete~\cite{Goldschlager:1977}, CVP for monotone
upward-stratified circuits is in $\logdcfl$~\cite{Chakraborty:2006} and that CVP for monotone upward-layered circuits is in $\acd$~\cite{Limaye:2006}.


\textbf{Tree Contraction: }
\label{section:contraction}
Let $T=(V,E)$ be a binary tree, the tree contraction algorithm~\cite{Karp:1990} reduces $T$ to a single node using a sequence of tree contraction steps. Let $l \in T$ be a leaf, $p$ be its parent and $s$ its sibling\footnote{If $l$ does not have a sibling then we take $s$ to be a fresh node.}. A tree contraction step collapses the triple $(l,p,s)$ into a single node. Formally, a new tree $T' = (V',E')$ is obtained from $T$ as follows: $V' = V \setminus \{l, p\}$ 
\begin{equation*}
E' = \left\{
  \begin{array}{l l}
    E \setminus \{(p,l),(p,s)\}                                   & \quad \text{if $p$ is the root of $T$} \\
    (E \setminus \{(p,l),(p,s),(q,p)\})\cup\{q,s\}        & \quad \text{otherwise ($q$ is the parent of $p$)} \\
  \end{array} \right.
\end{equation*}
Note that a contraction step is local and hence multiple non-interfering contractions can be performed in parallel. A tree contraction algorithm using only~$\lceil \log{n} \rceil$ parallel steps exists~\cite{Karp:1990}. Further, this algorithm can be implemented in~$\acc{1}$. 


Let $\varphi$ be an $\ltl$ formula and $\pi$ a trace. A tree contraction algorithm evaluating $\varphi$ on $\pi$ was given in~\cite{Kuhtz:2009}. The tree $T$ used in~\cite{Kuhtz:2009} is the parse tree of $\varphi$. The leaves of $T$ correspond to the atomic propositions and the internal nodes to Boolean or temporal operators. Each contraction step $(l,p,s)$ partially evaluates the operator associated with $p$. 

For example, suppose that the formula rooted at $p$ is $\psi \U q$ where $q$ is a proposition. Even if the value of $\psi$ is unknown, we can still make some inferences. E.g., if $q(i)=\top$ then $(\psi \U q)(i)=\top$. If the last value $q(|\pi|)=\bot$ then $(\psi \U q)(|\pi|)=\bot$ and so on. The contraction step removes the nodes for $\psi$ and $\U$ and then labels the node $s$ by the partial evaluation of the function $(X \U q) \circ \psi$. It was shown in~\cite{Kuhtz:2009} how to represent, manipulate and evaluate these functions efficiently. When a subformula $\psi$ is fully collapsed into a single node then the associated function is fully evaluated and the node is labelled by the constant $(\psi(1),\ldots,\psi(|\pi|))\in\bb^{|\pi|}$. The contraction algorithm eventually reduces the tree into a single node, which is labelled by $(\varphi(1),\ldots,\varphi(|\pi|))\in\bb^{|\pi|}$.

In general, a tree-contraction algorithm can evaluate a
function $f$ on a tree; each contraction step partially
evaluating $f$ on a subtree. In this paper, the evaluation is done as
follows. Let $\mathcal{C}$ be the set of constants and $\mathcal{F}$ be a
collection, closed under composition, of admissible functions $f: \mathcal{C} \to \mathcal{C}$.

\begin{compactitem}
\item A constant $c_v\in\mathcal{C}$ is attached to every leaf $v$ of $T$. The values of $c_v$ for the initial leaves are given as a part of the input.
\item A function $f_v\in\mathcal{F}$ is attached to every node $v$ of $T$. Initially, $f_v$ is the identity function.
\item A tree contraction of $(l,p,s)$ first builds $f' \in \mathcal{F}$ (depending on $c_l$ and $p$) implementing the partial evaluation on $p$. Let $f'' = f_p \circ f'$. If $s$ is a leaf then $c_s$ is replaced by $f''(c_s)$. Otherwise, $f_s$ is replaced by $f'' \circ f_s$.
\end{compactitem}

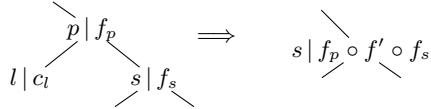
\begin{figure}
\centering
\begin{tikzpicture}[-,>=stealth',bend angle=30,auto,scale=0.55]

\tikzstyle{every state}=[rectangle, outer sep=0mm, transform shape,minimum width=10mm,minimum height=10mm]

\tikzstyle{gate}=[rectangle, outer sep=0mm, inner sep=0mm,draw=none,minimum width=5mm]
  
  \node (gl) at (0mm,0mm) [gate]{$l \,|\, c_l$};
  \node (gp) at (15mm,12mm) [gate]{$p \,|\, f_p$};
  \node (gs) at (30mm,0mm) [gate]{$s \,|\, f_s$};
  \node (gi) at (5mm,19mm) [gate]{};
  \node (gj) at (40mm,-7mm) [gate]{};
  \node (gh) at (20mm,-7mm) [gate]{};

  \node (aa) at (45mm,10mm) [gate]{$\implies$};

  \node (hs) at (80mm,7mm) [gate]{$s \,|\, f_p \circ f' \circ f_s$};
  \node (hi) at (70mm,17mm) [gate]{};
  \node (hj) at (90mm,0mm) [gate]{};
  \node (hh) at (70mm,0mm) [gate]{};
\path
  (gl) edge [] node [] {} (gp)
  (gs) edge [] node [] {} (gp)
  (gp) edge [] node [] {} (gi)
  (gs) edge [] node [] {} (gj)
  (gs) edge [] node [] {} (gh)
  
  (hs) edge [] node [] {} (hi)
  (hs) edge [] node [] {} (hj)
  (hs) edge [] node [] {} (hh)
  ;
\end{tikzpicture}
\caption{An example of a tree contraction step.}
\end{figure}

The output of the algorithm is the constant attached to the single remaining node. If each contraction step and admissible functions are in the complexity class $C$ then, by~\cite{Karp:1990}, the contraction algorithm calculating $c_{\textrm{root}}$ is in $\aconec$.


A tree contraction algorithm for $\ltl$ path checking~\cite{Kuhtz:2009} runs in~$\acd$. Constants $\mathcal{C}=\bb^n$ denote the truth values of propositions and subformulae. Functions $\mathcal{F}$ are represented by upward stratified circuits with $n$ input and $n$ output gates (\defemph{transducer circuits}), which are closed under composition~\cite{Kuhtz:2009} and their evaluation and composition is in $\logdcfl$~\cite{Barrington:1999}. For a fixed $s \in \bb^n$,~\cite{Kuhtz:2009} gives transducer circuits for $s \wedge x, s \vee x, s \U x$, and $s \R x$ as the functions of $x\in\bb^n$. In Section~\ref{mtlalgo}, we give transducer circuits for $\mtl$ temporal operators.

\section{Reduction from upward layered CVP to $\ltl$ path checking}
\label{section:reduction}

Given an upward layered monotone circuit $C$ with $n$ gates and $m$
wires we show how to build an $\ltl$ formula $\varphi$ over at
most $2n$ propositions and a trace $\pi$ of length $|\pi| \leq m$ such
that $C$ evaluates to $\top$ if and only if $\pi \models \varphi$.

Denote the layers of $C$ by $C_0, \ldots, C_k$ and the size of each $C_i$ by $n_i$. \nop{For each $0 \leq i \leq k$ and $1 \leq j \leq n_i$ }Let $\alpha_{i,j}$ be the gates in $C_i$ in the left-to-right order in the upward planar embedding of $C$. 
For each layer, we partition the trace into blocks---each of which stores the outputs of a gate in the layer. Fig.~\ref{fig:partition} shows a valid partitioning. In the figure,   gate $a$ occupies block $[1,1]$, gate $e$ occupies $[3,5]$, gate $g$ occupies $[1,7]$, etc.




\newcommand{\aij}{\alpha_{i,j}}

In general, a valid partitioning consists of a trace $\pi$ and intervals $v(i,j)$ associated with each gate $\aij$ such that $v(i,j)$ overlaps precisely with the blocks of the gates the gate $\aij$ depends on. Formally,


\begin{compactitem}
\item intervals $v(i,1), v(i, 2), \ldots, v(i,n_i)$ are disjoint and partition $[1,|\pi|]$ for every~$i$,
\item if $\alpha_{i+1,j}$ depends on $\alpha_{i,p},\alpha_{i,p+1},\ldots,\alpha_{i,q}$ then $v(i+1,j)\subseteq \cup_{r=p,\ldots,q} v(i,r)$ and $v(i+1,j)$ overlaps with each $v(i,r)$ for $p\leq r\leq q$,
\end{compactitem}

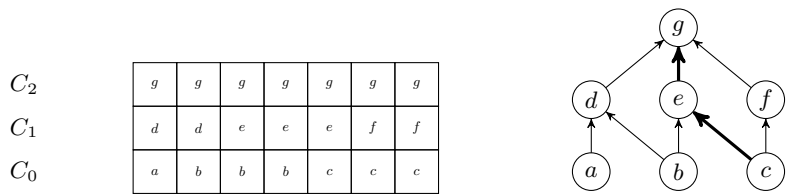
\begin{figure}
\centering
\begin{tikzpicture}[->,>=stealth',bend angle=30,auto,scale=0.58]

\tikzstyle{every state}=[rectangle, outer sep=0mm, transform shape,minimum width=10mm,minimum height=10mm]

  \node (u) at (-30mm,0mm) [draw=none]{$C_0$};
  \node (u0) at (0mm,0mm) [state]{$a$};
  \node (u1) at (10mm,0mm) [state]{$b$};
  \node (u2) at (20mm,0mm) [state]{$b$};
  \node (u2) at (30mm,0mm) [state]{$b$};
  \node (u2) at (40mm,0mm) [state]{$c$};
  \node (u2) at (50mm,0mm) [state]{$c$};
  \node (u3) at (60mm,0mm) [state]{$c$};

  \node (v) at (-30mm,10mm) [draw=none]{$C_1$};
  \node (v0) at (0mm,10mm) [state]{$d$};
  \node (v1) at (10mm,10mm) [state]{$d$};
  \node (v2) at (20mm,10mm) [state]{$e$};
  \node (v2) at (30mm,10mm) [state]{$e$};
  \node (v2) at (40mm,10mm) [state]{$e$};
  \node (v2) at (50mm,10mm) [state]{$f$};
  \node (v3) at (60mm,10mm) [state]{$f$};

  \node (w) at (-30mm,20mm) [draw=none]{$C_2$};
  \node (w0) at (0mm,20mm) [state]{$g$};
  \node (w1) at (10mm,20mm) [state]{$g$};
  \node (w2) at (20mm,20mm) [state]{$g$};
  \node (w2) at (30mm,20mm) [state]{$g$};
  \node (w2) at (40mm,20mm) [state]{$g$};
  \node (w2) at (50mm,20mm) [state]{$g$};
  \node (w3) at (60mm,20mm) [state]{$g$};

\tikzstyle{gate}=[circle, outer sep=0mm, inner sep=0mm,draw,minimum width=5mm]
  
  \node (ga) at (100mm,0mm) [gate]{$a$};
  \node (gb) at (120mm,0mm) [gate]{$b$};
  \node (gc) at (140mm,0mm) [gate]{$c$};
  \node (gd) at (100mm,16.5mm) [gate]{$d$};
  \node (ge) at (120mm,16.5mm) [gate]{$e$};
  \node (gf) at (140mm,16.5mm) [gate]{$f$};
  \node (gg) at (120mm,33mm) [gate]{$g$};
\path
  (ga) edge [] node [] {} (gd)
  (gb) edge [] node [] {} (gd)
  (gb) edge [] node [] {} (ge)
  (gc) edge [very thick] node [] {} (ge)
  (gc) edge [] node [] {} (gf)
  (gd) edge [] node [] {} (gg)
  (ge) edge [very thick] node [] {} (gg)
  (gf) edge [] node [] {} (gg)
  ;
\end{tikzpicture}
\caption{An upward layered circuit (on the right) with its partition (on the left). The path $\pi$ for the gate labelled $e$ is highlighted.}
\label{fig:partition}
\end{figure}

Suppose we are given a valid partitioning. Then for $i > 0$ and every $1 \leq j \leq n_i$ we build a \fcontext $\varphi_{i,j}$ mimicking the evaluation of the gate $\alpha_{i,j}$.\nop{\footnote{E.g., assuming gate $e$ in Fig.~\ref{fig:partition}} is an $\orc$ gate, we have $\varphi_{1,2} = \chi_{4,5} \since (\chi_{3,4} \U X)$.} 

For example, suppose that the gate $e$ in Fig.~\ref{fig:partition} is an $\orc$ gate and the values of the block in the first layer is $r = (a,b,b,b,c,c,c)\in\bb^7$ for some $a,b,c \in \bb$. Recall that $(\varphi \U \psi)(i) = \psi(i) \vee (\varphi(i) \wedge (\varphi \U \psi)(i+1))$. Hence, if $\varphi(i) = \bot$ then $(\varphi \U \psi)(i) = \psi(i)$ and if $\varphi(i) = \top$ then $(\varphi \U \psi)(i) = \psi(i) \vee (\varphi \U \psi)(i+1)$. Further recall that $\chi_{i,j}$ is a proposition that is true on $[i,j]$ and false otherwise. Hence, $(\chi_{3,4} \U r)(1) = a, (\chi_{3,4} \U r)(2) = b$ and $(\chi_{3,4} \U r)(5,6,7) = c$. Also, $(\chi_{3,4} \U r)(4) = r(4) \vee (\chi_{3,4} \U r)(5) = b \vee c$. Finally, $(\chi_{3,4} \U r)(3) = r(3) \vee (\chi_{3,4} \U r)(4) = b \vee (b \vee c) = b \vee c$. So $\chi_{3,4} \U r = (a, b, b \vee c, b \vee c, c, c, c)$. Performing a similar calculation backwards, we get $\chi_{4,5} \since (\chi_{3,4} \U r) = (a, b, b \vee c, b \vee c, b \vee c, c, c)$ which gives the value of block $e$ in Fig.~\ref{fig:partition} and leaves other blocks unchanged.

Denote the type of $\alpha_{i,j}$ by $\tau$ and the left and the right endpoint of $v(i,j)$ by $l$ and $r$, respectively. Then $\varphi_{i,j}$ is constructed as follows:

\begin{compactitem}
\item If $\tau = \onec$ then $\varphi_{i,j}(X) = \chi_{l,r} \vee X$.
\item If $\tau = \zeroc$ then $\varphi_{i,j}(X) = (\neg \chi_{l,r}) \wedge X$.
\item If $\tau = \idc$ then $\varphi_{i,j}(X) = X$.
\item If $\tau = \orc$ then $\varphi_{i,j}(X) = \chi_{l+1, r} \since (\chi_{l, r-1} \U X)$.
\item If $\tau = \andc$ then $\varphi_{i,j}(X) = \chi_{l+1, r} \T (\chi_{l, r-1} \R X)$.
\end{compactitem}

It can be shown that the \fcontext $\varphi_{i,j}$ updates the block $v(i,j)$ and leaves the other blocks unchanged. Hence, the \fcontext $\psi_i(X) = \varphi_{i,1} \circ \varphi_{i,2} \circ \cdots \circ \varphi_{i,n_i}$ evaluates the i-th layer $C_i$ of $C$. 

Formally, for each layer $C_i$ let $r_i\in\bb^n$ be a proposition such that $r_i(k) = \top$ if $k \in v(i,j)$ for some $j$ and $\alpha_{i,j}$ evaluates to $\top$ and $r_i(k)=\bot$, otherwise. Then, the formula $\varphi = (\psi_k \circ \psi_{k-1} \circ \cdots \circ \psi_1)(r_0)$ computes the output of the circuit.


\begin{lemma}
\label{lemma:phi}
Let $\psi_i, \varphi$ be as above. Then $\psi_i(r_{i-1}) = r_i$ and $\varphi(r_0)(1) = \top$ if and only if $C$ evaluates to $\top$. Moreover, $\varphi$ can be built in $\logspace$.
\end{lemma}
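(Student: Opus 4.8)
The plan is to reduce the entire statement to a single \emph{local update} claim and then do bookkeeping. The claim is: for a gate $\alpha_{i,j}$ with block $v(i,j)=[l,r]$, the \fcontext $\varphi_{i,j}$ rewrites every position of any input vector $X$ lying in $[l,r]$ to one common value---$\bigvee_{m=l}^{r}X(m)$ for an $\orc$ gate, $\bigwedge_{m=l}^{r}X(m)$ for an $\andc$ gate, $\top$ for $\onec$, $\bot$ for $\zeroc$, and $X$ itself for $\idc$---while leaving every position outside $[l,r]$ unchanged. First I would prove this by unfolding the fixpoint characterisations $(\varphi\U\psi)(k)=\psi(k)\vee(\varphi(k)\wedge(\varphi\U\psi)(k+1))$ and its past and dual analogues. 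For the $\orc$ case I would argue, exactly as in the worked example preceding the statement, that $\chi_{l,r-1}\U X$ accumulates the suffix disjunction $\bigvee_{m=k}^{r}X(m)$ backward across $[l,r]$ and coincides with $X$ elsewhere, and that the outer $\chi_{l+1,r}\since(\cdot)$ then spreads the full block disjunction $\bigvee_{m=l}^{r}X(m)$ forward over the block while again fixing all other positions. The $\andc$ case is the De Morgan dual, and the $\onec,\zeroc,\idc$ cases are immediate.

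Granting the local claim, I would derive $\psi_i(r_{i-1})=r_i$ as follows. Since the blocks $v(i,1),\ldots,v(i,n_i)$ are pairwise disjoint and each $\varphi_{i,j}$ both touches only $v(i,j)$ and recomputes its new value solely from the entries in $v(i,j)$, the order of composition in $\psi_i=\varphi_{i,1}\circ\cdots\circ\varphi_{i,n_i}$ is irrelevant: at the moment $\varphi_{i,j}$ acts, the positions of $v(i,j)$ still carry the original input $r_{i-1}$, so $\psi_i(r_{i-1})$ agrees with $r_i$ block by block, hence everywhere since the blocks partition $[1,|\pi|]$. To see the recomputed block value is the correct gate output I would invoke the overlap property of the partition: if $\alpha_{i,j}$ depends on $\alpha_{i-1,p},\ldots,\alpha_{i-1,q}$ then $v(i,j)\subseteq\bigcup_{m=p}^{q}v(i-1,m)$ and meets each $v(i-1,m)$, so $r_{i-1}$ restricted to $v(i,j)$ takes exactly the values $\alpha_{i-1,p},\ldots,\alpha_{i-1,q}$, each at least once; thus the block disjunction (resp. conjunction) equals $\orc(\alpha_{i-1,p},\ldots,\alpha_{i-1,q})$ (resp. the conjunction), which is precisely the output of $\alpha_{i,j}$. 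Chaining the per-layer equalities (base case $\psi_1(r_0)=r_1$) gives $\varphi(r_0)=(\psi_k\circ\cdots\circ\psi_1)(r_0)=r_k$; since the output gate's block contains position $1$, $\varphi(r_0)(1)=r_k(1)$ is its value, i.e. the value of $C$.

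For logspace constructibility I would compute the block endpoints $v(i,j)$ in $\logspace$ as prefix sums of the per-layer block widths read off from the wire structure of the layered embedding, then emit the constant propositions $\chi_{l,r}$ and assemble the fixed-size contexts $\varphi_{i,j}$; because composing \fcontexs grows the size only linearly, the whole of $\varphi$ and the trace $\pi$ are produced by a single logspace transducer. The main obstacle, and where I expect to spend the most effort, is the local update claim itself: getting the finite-trace boundary behaviour of the weak operators $\R$ and $\T$ right so that each $\andc$ and $\zeroc$ context confines its effect to its own block and spreads exactly the block conjunction. The delicate point is that the forward/backward accumulation of $\R$ and $\T$ is driven by the positions where the mask is \emph{false}---the complement of the region where $\U,\since$ accumulate---so the interaction of the two masks $\chi_{l,r-1}$ and $\chi_{l+1,r}$ in the $\andc$ construction must be verified directly rather than inferred from the $\orc$ case by symmetry.
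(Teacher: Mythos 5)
Your proposal is correct and follows essentially the same route as the paper's own proof: a per-gate local-update lemma (the paper's Lemma~\ref{lemma:phiijs}), propagation across a layer using disjointness of the blocks (which the paper phrases as a downward induction on $j$ rather than your ``order is irrelevant'' argument), identification of the block value with the gate output via the overlap property of Lemma~\ref{lemma:kijs}, chaining of the per-layer equalities, and logspace constructibility resting on the last item of Lemma~\ref{lemma:kijs}. Your closing caveat is in fact more careful than the paper, which disposes of the $\andc$ case with ``Dual to the above'': as you observe, $\R$ and $\T$ accumulate precisely where the mask is \emph{false}, so a literal mirror of the $\orc$ computation does not go through and the $\andc$ context must be verified with complemented masks.
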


Finally, we show how to devise $v(i,j)$'s -- the partitioning of the trace. Without loss of generality, connecting to a gate in the previous layer if necessary, we assume that all $\onec$ and $\zeroc$ gates not in $C_0$ have at least one predecessor.

Given a gate $\aij$ there is unique rightmost gate in the layer $C_{i+1}$ that $\aij$ is connected to by a wire. Now, start at $\aij$ and take the rightmost wires until the sink is reached. Denote the traversed path by $\pi_u$.
Similarly, there is unique rightmost gate in the layer $C_{i-1}$ that $\aij$ is connected to by a wire. Start at $\aij$ and take the rightmost wires going down until a gate in $C_0$ is reached. Denote the traversed path by $\pi_d$. Let $\pi$ be the concatenation of $\pi_d$ and $\pi_u$. (See Fig.~\ref{fig:partition})

Let $k_{i,j}$ be the number of wires to the left of $\pi$. A wire from $\alpha_{i,j}$ to $\alpha_{i+1,k}$ is to the left of the wire from $\alpha_{i,a}$ to $\alpha_{i+1,b}$ if $j < a$ or $k < b$. We store the output of gate $\aij$ in the block $v(i,j):=[k_{i,j-1} + 1,k_{i,j} + 1]$. We use $k_{i,0}=0$. 

Fig.~\ref{fig:partition} shows a circuit and the partitioning obtained by the above procedure. The rightmost wire going up and down from $e$ are $e \to g$ and $c \to e$, respectively. Thus, $\pi_u = e \to g$ and $\pi_d= c \to e$. The path $\pi = c \to e \to g$ is highlighted in the figure. Four wires $a \to d, b \to d, b \to e, d \to g$ are to the left of $\pi$. We associate the block $[3,5]$ with gate $e$. All blocks, grouped by layers, are shown in Fig.~\ref{fig:partition}.

The following lemma summarises the important properties of $k_{i,j}$'s.
\begin{lemma}
\label{lemma:kijs}
Let $k_{i,j}$'s and $v(i,j)$'s be as above. Then the following hold:
\begin{compactitem}
\item $k_{i,j-1}<k_{i,j}$ for every $i$ and $j$,
\item $k_{i,n_i}=k_{j,n_j}$ for every $i$ and $j$,
\item $k_{i,n_i}\leq m$ for every $i$,
\item for every $i$ and $j=1,\ldots,n_i$ the intervals $v(i,j)$'s partition $[1,k_{i,n_i}]$,
\item if $\alpha_{i+1,j}$ depends on $\alpha_{i,p},\alpha_{i,p+1},\ldots,\alpha_{i,q}$ then $v(i+1,j)\subseteq \cup_{r=p,\ldots,q} v(i,r)$ and $v(i+1,j)$ overlaps with each $v(i,r)$ for $p\leq r\leq q$,
\item each $k_{i,j}$ can be computed in $\logspace$.
\end{compactitem}
\end{lemma}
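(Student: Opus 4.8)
The plan is to read each $k_{i,j}$ as the number of wires lying strictly to the left of the bottom-to-top \emph{rightmost path} $\pi_{i,j}$ through $\alpha_{i,j}$ (the down-part and up-part from the construction, joined at $\alpha_{i,j}$), and to derive every item from a single \emph{comparison lemma} about these paths. Since the embedding is upward planar, each $\pi_{i,j}$ is a monotone curve meeting every layer $C_{t}$ in exactly one gate, so it cuts the $m$ wires into a left part of size $k_{i,j}$ and a right part. The comparison lemma I would prove first states that any two rightmost paths are \emph{non-crossing}, hence are weakly ordered ``left-of''; that this order agrees with the left-to-right order of the gates at any layer both paths traverse; and that along it the left-count $k$ is monotone, the gap between $\pi$ and $\pi'$ counting exactly the wires strictly between them. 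The only inputs needed are the stated non-crossing property ``$\alpha_{i,j}$ depends on $\alpha_{i-1,q}$ and $\alpha_{i,k}$ on $\alpha_{i-1,r}$ imply $j\le k \iff q\le r$'', the contiguity of each gate's predecessor block, and the normalisation (assumed without loss of generality in the text) that every gate off $C_0$ has a predecessor and, dually, every non-output gate has a successor.

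With the comparison lemma the first four items are short. For $k_{i,j-1}<k_{i,j}$, the paths $\pi_{i,j-1}$ and $\pi_{i,j}$ sit at the adjacent gates $\alpha_{i,j-1}$ and $\alpha_{i,j}$ of layer $i$, so $\pi_{i,j-1}$ is left of $\pi_{i,j}$; a wire incident to $\alpha_{i,j-1}$ lies on $\pi_{i,j-1}$ yet strictly left of $\pi_{i,j}$, so the counts differ by at least one. For $k_{i,n_i}=k_{j,n_j}$ I would show that consecutive rightmost paths are literally the same path: as $\alpha_{i,n_i}$ is the rightmost gate of $C_i$ it is the rightmost predecessor of $\alpha_{i+1,n_{i+1}}$, and as the latter is rightmost in $C_{i+1}$ it is the rightmost successor of $\alpha_{i,n_i}$, so the down-part of $\pi_{i+1,n_{i+1}}$ and the up-part of $\pi_{i,n_i}$ glue into one boundary path $\pi^{\ast}$; chaining over $i$ shows all $\pi_{i,n_i}$ equal $\pi^{\ast}$, whence all $k_{i,n_i}$ equal the number of wires strictly left of $\pi^{\ast}$, which is at most $m$ --- giving items two and three together. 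Item four, that the blocks $v(i,j)=[k_{i,j-1}+1,k_{i,j}+1]$ tile the trace overlapping only at shared endpoints, is then immediate from $k_{i,0}=0$ and the strict monotonicity just proved.

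The crux, and the step I expect to be the main obstacle, is item five. Writing $\alpha_{i,p},\dots,\alpha_{i,q}$ for the contiguous predecessors of $\alpha_{i+1,j}$, the containment $v(i+1,j)\subseteq\bigcup_{r=p}^{q}v(i,r)=[k_{i,p-1}+1,k_{i,q}+1]$ and the overlap with every $v(i,r)$ reduce, through the explicit endpoints and the monotonicity of $r\mapsto k_{i,r}$, to the four inequalities $k_{i,p-1}\le k_{i+1,j-1}\le k_{i,p}$ and $k_{i,q-1}\le k_{i+1,j}\le k_{i,q}$: the left boundary of the new block must sit inside the block of the leftmost predecessor, and its right boundary inside that of the rightmost predecessor. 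Each inequality is a left-of comparison decided at a layer both paths traverse. For instance $\pi_{i+1,j}$ meets layer $i$ at its rightmost predecessor $\alpha_{i,q}$, giving $k_{i,q-1}\le k_{i+1,j}$ at once, and, compared instead at layer $i+1$ where $\pi_{i,q}$ runs up to a successor $\alpha_{i+1,j^{\ast}}$ with $j^{\ast}\ge j$, also $k_{i+1,j}\le k_{i,q}$; the $p$-end is symmetric. The genuinely delicate points are two. First, I must show $\pi_{i+1,j-1}$ meets layer $i$ no further left than $\alpha_{i,p-1}$: this ``no gap'' fact --- that the predecessor blocks of consecutive $C_{i+1}$-gates abut or overlap --- fails without the successor normalisation, and I would obtain it from planarity, since once every $C_i$-gate has a successor the (contiguous, ordered) predecessor blocks cover $C_i$ with no gaps. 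Second, I must justify that a ``weakly left at one layer'' relation is a genuine global left-of relation between the two monotone paths, which is exactly where non-crossing is indispensable and where the boundary cases ($j^{\ast}=j$, or predecessors shared at a block boundary) need separate checking. Finally, item six is routine: $\pi_{i,j}$ is traced by a deterministic rightmost-neighbour walk storing only the current gate, and the wires strictly left of it are tallied by scanning all $m$ wires --- each test recomputing on the fly the two path-gates it must compare against --- so a single counter computes $k_{i,j}$ in $\logspace$.
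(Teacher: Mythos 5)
Your proposal is correct and follows essentially the same route as the paper's own proof: reading $k_{i,j}$ as the number of wires strictly left of the rightmost path through $\alpha_{i,j}$, comparing non-crossing rightmost paths to get items 1--4 (with the gluing of the rightmost boundary paths for items 2--3), and computing each $k_{i,j}$ by a rightmost-neighbour walk plus a wire scan for item 6.

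The one place you genuinely depart from the paper is item 5, and there your version is the sound one: your four bounds $k_{i,p-1}\le k_{i+1,j-1}\le k_{i,p}$ and $k_{i,q-1}\le k_{i+1,j}\le k_{i,q}$ are exactly what containment plus overlap require, whereas the paper's appendix asserts the chain $k_{i+1,j-1}<k_{i,p}<k_{i+1,j}\le k_{i,q}$, which omits the left-containment bound $k_{i,p-1}\le k_{i+1,j-1}$ altogether, is too weak to give overlap with $v(i,q)$ when $q>p+1$, and is even self-contradictory when $p=q$ (e.g.\ a gate whose single predecessor is shared with a sibling to its right, as with two gates both fed by one gate below) --- so the ``delicate points'' you single out (the no-gap property needing the successor normalisation, and the shared-predecessor boundary cases) are precisely the spots the paper's sketch glosses over, and your treatment of them is what makes the argument go through.
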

This finishes the reduction from upward-layered CVP to $\ltl$ path checking.  It was shown in~\cite{Kuhtz:2009} that the latter is in $\acd$. Therefore:

\begin{theorem}
\label{thm:upwardlayered}
The CVP for upward-layered monotone circuits is in $\acd$.
\end{theorem}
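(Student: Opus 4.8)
The plan is to package the reduction developed throughout this section into a $\logspace$ many-one reduction from upward-layered monotone CVP to $\ltl$ path checking, and then to combine this with the known membership of $\ltl$ path checking in $\acd$, using the fact that $\acd$ is closed under $\logspace$ reductions. The two lemmas already proved do essentially all of the work, so the proof amounts to assembling them in the right order and checking that the composition does not escape $\acd$.

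First I would fix an arbitrary upward-layered monotone circuit $C$ with $n$ gates and $m$ wires, together with its upward planar embedding, and run the partitioning procedure described above to produce the quantities $k_{i,j}$ and the blocks $v(i,j)$. By Lemma~\ref{lemma:kijs} the blocks $v(i,j)$ form a valid partitioning (the first five bullets give exactly the two conditions in the definition of a valid partitioning), and each $k_{i,j}$ — hence each endpoint of each $v(i,j)$ — is computable in $\logspace$ (the last bullet). In particular the resulting trace $\pi$ has length $|\pi| = k_{i,n_i} \leq m$.

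Next, using this valid partitioning, I would build the \fcontexs $\varphi_{i,j}$ and their layer-wise compositions $\psi_i = \varphi_{i,1}\circ\cdots\circ\varphi_{i,n_i}$ exactly as prescribed, and form $\varphi = (\psi_k \circ \cdots \circ \psi_1)(r_0)$. By Lemma~\ref{lemma:phi} we have $\psi_i(r_{i-1}) = r_i$ for each layer, so $\varphi(r_0)(1) = \top$ if and only if $C$ evaluates to $\top$, and the same lemma guarantees $\varphi$ can be produced in $\logspace$. Since composing \fcontexs only adds their sizes and each gate contributes a bounded-size context, $\varphi$ uses at most $2n$ propositions; thus $(\pi,\varphi)$ is a polynomial-size $\ltl$ path-checking instance and the map $C \mapsto (\pi,\varphi)$ is a $\logspace$ many-one reduction that preserves the answer.

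Finally I would conclude by composition. $\ltl$ path checking is in $\acd$ by~\cite{Kuhtz:2009}. Because $\logspace \subseteq \logdcfl \subseteq \acc{1} \subseteq \acd$, the $\logspace$ preprocessing can be simulated inside the $\acc{1}$ part of an $\acd = \acc{1}[\logdcfl]$ circuit, and a constant number of nested $\acc{1}$-type computations stays in $\acc{1}$ since logarithmic depths add to logarithmic depth; hence $\acd$ is closed under $\logspace$ many-one reductions, and upward-layered monotone CVP lies in $\acd$. I expect this last step to be the only real subtlety: one must verify that precomposing the $\acd$ path-checking algorithm with the $\logspace$ reduction remains in $\acd$ rather than drifting to a higher level of the $\acc{}$ hierarchy. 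This rests entirely on the inclusion chain stated in the preliminaries, while the substantive correctness and size bounds have already been discharged by Lemmas~\ref{lemma:phi} and~\ref{lemma:kijs}.
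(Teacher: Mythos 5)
Your proposal is correct and follows essentially the same route as the paper: the paper also proves Theorem~\ref{thm:upwardlayered} by combining the $\logspace$ reduction established in Lemmas~\ref{lemma:kijs} and~\ref{lemma:phi} with the $\acd$ algorithm for $\ltl$ path checking from~\cite{Kuhtz:2009}. Your explicit verification that $\acd$ is closed under $\logspace$ many-one reductions (via $\logspace \subseteq \acc{1}$ and the additivity of logarithmic depths) is a detail the paper leaves implicit, and it is argued correctly.
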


An alternative proof of Theorem~\ref{thm:upwardlayered} already appeared in~\cite{Limaye:2006}. Moreover, the relationship shows that any improvement in $\ltl$ path checking would entail an improvement in the evaluation of upward-layered monotone circuits. 

The above reduction assumes the monotonicity of the input circuit. However, if the target logic $\ltl$ is extended to include binary exclusive or (xor) as a connective, then evaluating $\notc$ gates becomes possible using $\varphi_{i,j}(X) = \chi_{l,r} \oplus X$ as a \fcontext for $\notc$ gate $\alpha_{i,j}$. Noting that CVP is P-complete for general (non-monotone) upward stratified circuits~\cite{Goldschlager:1977}, we have the following:

\begin{theorem}
$\ltlxor$ path checking is $\ptime$-complete.
\end{theorem}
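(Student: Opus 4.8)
The plan is to establish both membership in $\ptime$ and $\ptime$-hardness. For the upper bound, I would observe that adding exclusive-or to the Boolean connectives does not disturb the polynomial dynamic-programming algorithm underlying path checking: the value $(\varphi \oplus \psi)(i)$ is determined pointwise from $\varphi(i)$ and $\psi(i)$, exactly as for $\wedge$ and $\vee$. Hence tabulating the truth values of all subformulae from the leaves upward still runs in polynomial time, and $\ltlxor$ path checking lies in $\ptime$ by the same argument that places $\mtl$ path checking in $\ptime$~\cite{Markey:03}.

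The substance is the lower bound, and here the plan is to reuse the reduction of Section~\ref{section:reduction} almost verbatim, now starting from a \emph{non-monotone} circuit. Every upward-stratified circuit is in particular upward-layered (its input gates simply lie in $C_0$), so the partitioning machinery of Lemma~\ref{lemma:kijs} and the layer-by-layer composition of Lemma~\ref{lemma:phi} apply unchanged; the only missing ingredient is a \fcontext for $\notc$ gates. For a $\notc$ gate $\alpha_{i,j}$ occupying the block $[l,r]$, I would take $\varphi_{i,j}(X) = \chi_{l,r} \oplus X$. Pointwise, this leaves $X(k)$ unchanged whenever $k \notin [l,r]$ (since $\chi_{l,r}(k) = \bot$) and negates $X(k)$ whenever $k \in [l,r]$ (since $\chi_{l,r}(k) = \top$). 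Because a $\notc$ gate has fan-in one, its block is contained in the block of its unique predecessor and therefore already carries that predecessor's (constant) value; flipping it on $[l,r]$ correctly implements negation while leaving all other blocks untouched, which is precisely the block-update invariant maintained by the monotone \fcontexs.

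With this single additional case, the reduction turns any non-monotone upward-stratified circuit $C$ into an $\ltlxor$ formula $\varphi$ and a trace $\pi$, both computable in $\logspace$ exactly as before, such that $\pi, 1 \models \varphi$ if and only if $C$ evaluates to $\top$. Since CVP for general upward-stratified circuits is $\ptime$-complete~\cite{Goldschlager:1977} and $\logspace$ reductions preserve $\ptime$-hardness, this yields $\ptime$-hardness of $\ltlxor$ path checking, completing the proof.

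The part I expect to require the most care is re-establishing, for non-monotone circuits, the invariant of Lemma~\ref{lemma:phi} that each $\varphi_{i,j}$ modifies only its own block and that composing the contexts across a layer and then across layers preserves correctness once the new non-monotone step is interleaved with the monotone ones. For the $\notc$ context the local argument is simpler than for $\U$/$\since$, but it still hinges on the observation that the incoming value on $[l,r]$ equals the single predecessor's value and is constant there, so that a pointwise negation suffices; verifying that this meshes cleanly with the existing monotone constructions is the main obstacle, though it follows the same template already used in Section~\ref{section:reduction}.
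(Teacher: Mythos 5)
Your proposal is correct and follows essentially the same route as the paper: the paper likewise extends the Section~\ref{section:reduction} reduction to non-monotone upward-stratified circuits by using $\varphi_{i,j}(X) = \chi_{l,r} \oplus X$ as the \fcontext for $\notc$ gates and then invokes the $\ptime$-completeness of CVP for such circuits~\cite{Goldschlager:1977}. Your additional observations (the explicit $\ptime$ upper bound via the dynamic-programming algorithm, and the fan-in-one argument that the $\notc$ block carries its predecessor's constant value so pointwise flipping preserves the block-update invariant) are details the paper leaves implicit, not a different proof.
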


Thus, the complexity of $\ltl$ path checking depends on the monotonicity of the Boolean connectives present in the formula.

\section{$\mtl$ path checking is efficiently parallelisable}
\label{mtlalgo}

We now show how the tree-contraction method of~\cite{Kuhtz:2009} extends to full $\mtl$;~giving an $\acd$ path-checking algorithm for $\mtl$.\nop{An $\acd$ path-checking algorithm was obtained previously~\cite{Kuhtz:2012} for a very restrictive fragment of $\mtl$.} By~\cite{Kuhtz:2009}, summarised in~Section~\ref{section:contraction}, it suffices to give upward stratified transducer circuits for $\U_I$ and its~duals.

Let $\pi$ be the input trace with (floating-point) timestamps $t_1,\ldots,t_n$. Fix an interval $I$ and consider the $\U_I$ operator. We now describe a dynamic-programming approach that yields planar circuits calculating $(\psi_1 \U_I \psi_2)(i)$. For $i \neq j$ the values $(\psi_1 \U_I \psi_2)(i)$ and $(\psi_1 \U_I \psi_2)(j)$ depend on the values of subformulae in some future intervals. In general, these intervals overlap and so naive constructions of transducer circuits are not planar.  See Fig.~\ref{fig:transducersuphicombined} for the kind of circuits we build.

Recall, that the tree contraction is applied only to a leaf, its parent and its sibling. Let $s\in\bb^n$ be a vector. We need to construct only circuits for $s\U_I \varphi$ and $\varphi\U_I s$ for known $s$. First consider the case $s\U_I\varphi$. (see left part of Fig.~\ref{fig:transducersuphicombined})

For index $1\leq i\leq n$ the formula $(s \U_I \varphi)(i)$ is true if there is $j \geq i$ such that $t_j \in t_i + I$ and $\varphi(j) = \top$ and $s(k) = \top$ for all $i \leq k < j$. So let $T_i = \{j \setsep t_j \in t_i + I\}$ be the set of  indices in $t_i + I$. If $T_i=\emptyset$ then $(s \U_I \varphi)(i) = \bot$. 

Otherwise, let $\textrm{first}(i) = \min T_i$ and $\textrm{last}(i) = \max T_i$ be the first and the last index in the interval $t_i+I$, respectively. So $(s \U_I \varphi)(i)$ is true if there exists $\textrm{first}(i) \leq j \leq \textrm{last}(i)$ such that $\varphi(j) = \top$ and $s(k) = \top$ for all $i \leq k < j$.

Now, the value of  $s$ is known. So let $\textrm{seg}(i)=\min\{j \setsep j\geq i \wedge s(j)=\bot \}$ be the first index no smaller than $i$ such that $s(j)$ evaluates to false, i.e., $s(j)$ is true from $i$ to $\textrm{seg}(i) - 1$. Thus, $(s \U_I \varphi)(i)$ is true if there exists $\textrm{first}(i) \leq j \leq \textrm{last}(i)$ such that $\varphi(j) = \top$ and $ j \leq \textrm{seg}(i)$.  So take $L_i = \textrm{first}(i)$ and $R_i = \min(\textrm{last}(i), \textrm{seg}(i))$. Then $(s \U_I \varphi)(i)$ is true if $\bigvee_{L_i\leq j\leq R_i} \varphi(j)$ is true.
\begin{figure}
\centering
\begin{tikzpicture}[->,>=stealth',bend angle=30,auto,scale=0.44]
\tikzstyle{every state}=[rectangle, outer sep=0mm, transform shape,minimum width=10mm,minimum height=10mm]

\tikzstyle{gate}=[rectangle, outer sep=1mm, inner sep=0mm,draw=none,minimum width=5mm]
\tikzstyle{lab}=[rectangle, outer sep=1mm, inner sep=0mm,draw=none,minimum width=5mm,font=\scriptsize]
  
  \node (p1) at (0,0) [gate]{$\varphi_1$};
  \node (p2) at (2,0) [gate]{$\varphi_2$};
  \node (p3) at (4,0) [gate]{$\varphi_3$};
  \node (p4) at (6,0) [gate]{$\varphi_4$};
  \node (p5) at (8,0) [gate]{$\varphi_5$};
  \node (p6) at (10,0) [gate]{$\varphi_6$};
  \node (p7) at (12,0) [gate]{$\varphi_7$};
  
  \node (u1) at (0,-1) [lab]{$t: 1$};
  \node (v1) at (0,-2) [lab]{$s:\bot$};
  \node (u2) at (2,-1) [lab]{$2$};
  \node (v2) at (2,-2) [lab]{$\top$};
  \node (u3) at (4,-1) [lab]{$3$};
  \node (v3) at (4,-2) [lab]{$\top$};
  \node (u4) at (6,-1) [lab]{$4$};
  \node (v4) at (6,-2) [lab]{$\top$};
  \node (u5) at (8,-1) [lab]{$5$};
  \node (v5) at (8,-2) [lab]{$\top$};
  \node (u6) at (10,-1) [lab]{$6$};
  \node (v6) at (10,-2) [lab]{$\top$};
  \node (u7) at (12,-1) [lab]{$8.5$};
  \node (v7) at (12,-2) [lab]{$\bot$};

  \node (s34) at (4,2) [gate]{$\vee$};
  \node (s35) at (4,4) [gate]{$\vee$};
  \node (s36) at (4,6) [gate]{$\vee$};
  \node (s45) at (6,2) [gate]{$\vee$};
  \node (s46) at (6,4) [gate]{$\vee$};
  \node (s56) at (8,2) [gate]{$\vee$};
  \node (s57) at (8,4) [gate]{$\vee$};
  \node (s67) at (10,2) [gate]{$\vee$};
  \node (o1) at (0,8) [gate]{$o_1$};
  \node (o2) at (2,8) [gate]{$o_2$};
  \node (o3) at (4,8) [gate]{$o_3$};
  \node (o4) at (6,8) [gate]{$o_4$};
  \node (o5) at (8,8) [gate]{$o_5$};
  \node (o6) at (10,8) [gate]{$o_6$};
  \node (o7) at (12,8) [gate]{$o_7$};
  
  \node (b1) at (0,7) [gate]{$\bot$};
  \node (b7) at (12,7) [gate]{$\bot$};
  \node (xx) at (13,-3) [gate]{};
  \node (yy) at (13,9) [gate]{};
\path[-]
  (xx) edge [dashed] node [] {} (yy)
  ;
\path
  (p3) edge [] node [] {} (s34)
  (p4) edge [] node [] {} (s34)
  (p4) edge [] node [] {} (s45)
  (p5) edge [] node [] {} (s45)
  (p5) edge [] node [] {} (s56)
  (p6) edge [] node [] {} (s56)
  (p6) edge [] node [] {} (s67)
  (p7) edge [] node [] {} (s67)
  
  (s34) edge [] node [] {} (s35)
  (s45) edge [] node [] {} (s35)
  (s45) edge [] node [] {} (s46)
  (s56) edge [] node [] {} (s46)
  (s56) edge [] node [] {} (s57)
  (s67) edge [] node [] {} (s57)

  (s35) edge [] node [] {} (s36)
  (s46) edge [] node [] {} (s36)
  
  (s36) edge [] node [] {} (o2)
  (s46) edge [] node [] {} (o3)
  (s57) edge [] node [] {} (o4)
  (s67) edge [] node [] {} (o5)
  (p7) edge [] node [] {} (o6)
  
  (b1) edge [] node [] {} (o1)
  (b7) edge [] node [] {} (o7)
  
  ;

  \node (ap1) at (14,0) [gate]{$\varphi_1$};
  \node (ap2) at (16,0) [gate]{$\varphi_2$};
  \node (ap3) at (18,0) [gate]{$\varphi_3$};
  \node (ap4) at (20,0) [gate]{$\varphi_4$};
  \node (ap5) at (22,0) [gate]{$\varphi_5$};
  \node (ap6) at (24,0) [gate]{$\varphi_6$};
  \node (ap7) at (26,0) [gate]{$\varphi_7$};
  
  \node (au1) at (14,-1) [lab]{$t:1$};
  \node (av1) at (14,-2) [lab]{$s:\bot$};
  \node (au2) at (16,-1) [lab]{$2$};
  \node (av2) at (16,-2) [lab]{$\bot$};
  \node (au3) at (18,-1) [lab]{$3$};
  \node (av3) at (18,-2) [lab]{$\bot$};
  \node (au4) at (20,-1) [lab]{$3.5$};
  \node (av4) at (20,-2) [lab]{$\bot$};
  \node (au5) at (22,-1) [lab]{$3.8$};
  \node (av5) at (22,-2) [lab]{$\top$};
  \node (au6) at (24,-1) [lab]{$4$};
  \node (av6) at (24,-2) [lab]{$\bot$};
  \node (au7) at (26,-1) [lab]{$4.5$};
  \node (av7) at (26,-2) [lab]{$\top$};

  \node (as12) at (14,2) [gate]{$\wedge$};
  \node (as13) at (14,4) [gate]{$\wedge$};
  \node (as14) at (14,6) [gate]{$\wedge$};
  \node (as23) at (16,2) [gate]{$\wedge$};
  \node (as24) at (16,4) [gate]{$\wedge$};
  \node (as34) at (18,2) [gate]{$\wedge$};
  \node (as35) at (18,4) [gate]{$\wedge$};
  \node (as36) at (18,6) [gate]{$\wedge$};
  \node (as45) at (20,2) [gate]{$\wedge$};
  \node (as46) at (20,4) [gate]{$\wedge$};
  \node (as56) at (22,2) [gate]{$\wedge$};
  \node (as57) at (22,4) [gate]{$\wedge$};
  \node (as67) at (24,2) [gate]{$\wedge$};
  \node (ao1) at (14,8) [gate]{$o_1$};
  \node (ao2) at (16,8) [gate]{$o_2$};
  \node (ao3) at (18,8) [gate]{$o_3$};
  \node (ao4) at (20,8) [gate]{$o_4$};
  \node (ao5) at (22,8) [gate]{$o_5$};
  \node (ao6) at (24,8) [gate]{$o_6$};
  \node (ao7) at (26,8) [gate]{$o_7$};
  
  \node (at7) at (26,7) [gate]{$\bot$};
\path
  (ap1) edge [] node [] {} (as12)
  (ap2) edge [] node [] {} (as23)
  (ap3) edge [] node [] {} (as34)
  (ap4) edge [] node [] {} (as45)
  (ap5) edge [] node [] {} (as56)
  (ap6) edge [] node [] {} (as67)
  (ap2) edge [] node [] {} (as12)
  (ap3) edge [] node [] {} (as23)
  (ap4) edge [] node [] {} (as34)
  (ap5) edge [] node [] {} (as45)
  (ap6) edge [] node [] {} (as56)
  (ap7) edge [] node [] {} (as67)
  
  (as12) edge [] node [] {} (as13)
  (as23) edge [] node [] {} (as13)
  (as23) edge [] node [] {} (as24)
  (as34) edge [] node [] {} (as24)
  (as34) edge [] node [] {} (as35)
  (as45) edge [] node [] {} (as35)
  (as45) edge [] node [] {} (as46)
  (as56) edge [] node [] {} (as46)
  (as56) edge [] node [] {} (as57)
  (as67) edge [] node [] {} (as57)

  (as13) edge [] node [] {} (as14)
  (as24) edge [] node [] {} (as14)
  
  (as35) edge [] node [] {} (as36)
  (as46) edge [] node [] {} (as36)
  
  (as14) edge [] node [] {} (ao1)
  (as24) edge [] node [] {} (ao2)
  (as36) edge [] node [] {} (ao3)
  (as46) edge [] node [] {} (ao4)
  (as57) edge [] node [] {} (ao5)
  (as67) edge [] node [] {} (ao6)

  (at7) edge [] node [] {} (ao7)
  
  ;
\end{tikzpicture}
\caption{Transducer circuits for $s \U_{[1,5]}\varphi$ and $\varphi \U_{[1,5]} s$. The first line below the circuits are timestamps, the second row are values of $s$. Note that different timestamps and $s$ are used in the two examples. The inputs and the outputs of the circuits are denoted $\varphi_i$ and $o_i$ respectively.}
\label{fig:transducersuphicombined}
\end{figure}
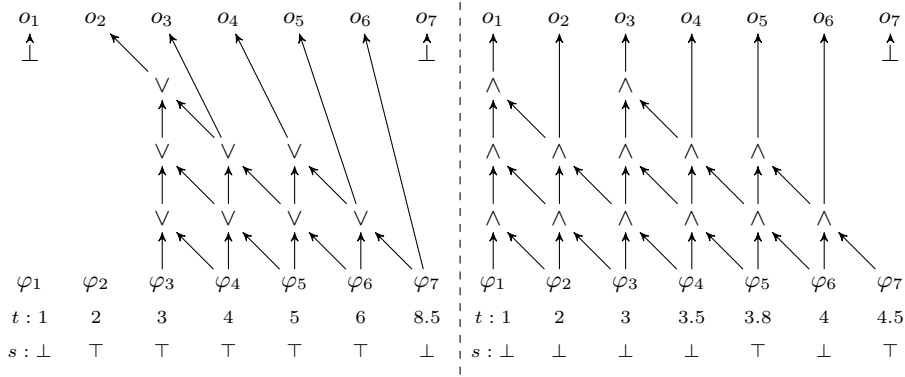

To build the circuits, we formalise the intuition from the left half of Fig.~\ref{fig:transducersuphicombined}. The circuit $C$ consists of internal gates $d_{p,q}$ and output gates $o_i$ for each $1 \leq i \leq n$. Each internal gate $d_{p,q}$ calculates $\varphi_p \vee \cdots \vee \varphi_q$. Precisely, $d_{p,q}$ is present in the circuit if there is an $i$ such that $L_i \leq p \leq q \leq R_i$. If $p=q$ then $l(d_{p,q}) = (\idc, \varphi_p)$. Otherwise,  $l(d_{p,q}) = (\orc, d_{p,q-1}, d_{p+1,q})$.


For the output gates, we define $o_i$ so that $o_i = \bigvee_{L_i\leq j\leq R_i} \varphi(j) = (s \U_I \varphi)(i)$. Specifically, if $T_i = \emptyset$  then we set $l(o_i)=\bot$, otherwise, $l(o_i) = (\idc, d_{L_i, R_i})$.

An embedding $\gamma : C \to \rr^2$ for the circuit $C$ is $\gamma(o_i) = (i, n)$, $\gamma(\varphi_i) = (i, 0)$ and $\gamma(d_{p,q})=(p, q-p+1)$. Observe that $L_i\leq L_{i+1}$ and $R_i\leq R_{i+1}$. Hence, it cannot happen that $L_i<L_j\leq R_j<R_i$ for some $i$ and $j$. So the intervals may overlap but never is one properly contained in another. This ensures that the embedding is planar.

Finally, note that it is possible to compute $L_i$ and $R_i$ for every $i$ in logarithmic space\nop{\footnote{$t_j - t_i \geq a$ can be checked by first comparing the largest bits. If they are equal then by comparing the second largest bits and so on.}}. Hence, the circuit construction can be carried out in logarithmic space. 


\begin{lemma}
\label{lemma:circuit}
Let $p$ be any proposition. For each $i$, set the input $\varphi_i$ of the circuit to $p(i)$. Then for each $j$, the value of $o_j$ is true if and only if $(s \U_I p)(j)$ is true. 
\end{lemma}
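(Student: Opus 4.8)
The plan is to separate the functional correctness of the circuit into two claims and then combine them with the semantic characterisation already derived in the text preceding the lemma. First I would establish an internal-gate invariant: for every gate $d_{p,q}$ present in $C$, once the inputs are fixed to $\varphi_k=p(k)$, the gate evaluates to $\bigvee_{p\leq k\leq q}p(k)$. This goes by induction on $q-p$. The base case $p=q$ is immediate, since $d_{p,p}=(\idc,\varphi_p)$ outputs $p(p)$. For $p<q$ the definition gives $d_{p,q}=d_{p,q-1}\vee d_{p+1,q}$, and by the induction hypothesis the two operands equal $\bigvee_{p\leq k\leq q-1}p(k)$ and $\bigvee_{p+1\leq k\leq q}p(k)$; their disjunction is $\bigvee_{p\leq k\leq q}p(k)$ because $\orc$ is idempotent and the two ranges together cover $[p,q]$. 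A small but essential point the induction relies on is well-definedness: whenever $d_{p,q}$ is present, so that some $i$ satisfies $L_i\leq p\leq q\leq R_i$, and $p<q$, the same $i$ witnesses $L_i\leq p\leq q-1\leq R_i$ and $L_i\leq p+1\leq q\leq R_i$, so both operands $d_{p,q-1}$ and $d_{p+1,q}$ are also present. Hence the recursion never refers to a missing gate.

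Second I would discharge the output claim. For the output gate $o_i$, if $T_i=\emptyset$ then by construction $o_i=\bot$, which agrees with $(s\U_I p)(i)=\bot$ in that case. Otherwise $o_i=d_{L_i,R_i}$, which by the internal-gate invariant equals $\bigvee_{L_i\leq j\leq R_i}p(j)$. It then remains only to invoke the characterisation that $(s\U_I p)(i)$ holds if and only if $\bigvee_{L_i\leq j\leq R_i}p(j)$ holds, which yields the desired equivalence $o_i=(s\U_I p)(i)$ for every $i$.

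I expect the main care point to lie in justifying that semantic characterisation rigorously and in covering the degenerate ranges. Two facts underpin it. First, since $t_i+I$ is an interval and the timestamps are strictly increasing, $T_i$ is a \emph{contiguous} block of indices $\{\textrm{first}(i),\ldots,\textrm{last}(i)\}$; moreover $I\subseteq\rr_{\geq 0}$ forces every $j\in T_i$ to satisfy $j\geq i$, so the index constraint coming from the Until semantics is automatically met. Second, by the definition of $\textrm{seg}(i)$, the prefix condition ``$s(k)=\top$ for all $i\leq k<j$'' is equivalent to $j\leq\textrm{seg}(i)$. Combining these two facts collapses the existential over $j$ in the Until semantics to the single bounded disjunction over $L_i\leq j\leq R_i$ with $L_i=\textrm{first}(i)$ and $R_i=\min(\textrm{last}(i),\textrm{seg}(i))$. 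The one boundary case I would treat explicitly is $L_i>R_i$, which occurs exactly when $s$ already fails before the interval $t_i+I$ begins: here the disjunction is empty, $(s\U_I p)(i)$ is false, and the output must be $o_i=\bot$. I would record this as part of the definition of $o_i$, so that $d_{L_i,R_i}$ is never referenced when its index range is empty; with that convention in place the two claims above combine to give the lemma.
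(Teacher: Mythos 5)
Your proof is correct and follows essentially the same route as the paper's own: an induction on $q-p$ establishing that $d_{p,q}$ computes $\varphi_p \vee \cdots \vee \varphi_q$, combined with the semantic characterisation $(s \U_I p)(i) = \bigvee_{L_i \leq j \leq R_i} p(j)$ derived in the main text before the lemma. Your explicit handling of the degenerate case $L_i > R_i$ (where $T_i \neq \emptyset$ but $\mathrm{seg}(i) < \mathrm{first}(i)$) is a welcome refinement: the paper's prose sets $o_i = \bot$ only when $T_i = \emptyset$ and would otherwise reference a nonexistent gate $d_{L_i,R_i}$, a case its proof glosses over and which is resolved only implicitly in Fig.~\ref{fig:transducersuphicombined}, where $o_1$ is wired to~$\bot$.
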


We now give an analogous derivation and circuit construction for $\varphi \U_I s$. See the right side of Fig.~\ref{fig:transducersuphicombined} for an example of a resulting circuit. 

For index $1\leq i\leq n$ the formula $(\varphi \U_I s)(i)$ is true if there exists $j \geq i$ such that $t_j \in t_i + I$ and $s(j)=\top$ and $\varphi(k)=\top$ for all $i \leq k < j$. Since $s$ is known, we choose the first possible $j$. So let $\textrm{limit}(i) = \min \{ j \setsep \textrm{first}(i) \leq j \leq \textrm{last}(i) \wedge s(j)=\top \}$ be the first $j$ in the interval $t_i + I$ such that $s(j)$ is true.

If there is no such index then $(\varphi \U_I s)(i) = \bot$. Otherwise, $(\varphi \U_I s)(i)$ is true if $\varphi(k)=\top$ for all $i \leq k < \textrm{limit}(i)$. That is, $(\varphi \U_I s)(i) = \bigwedge_{i\leq j<\textrm{limit}(i)}\varphi_j$.

Now, the circuit $C$ (see right half of Fig.~\ref{fig:transducersuphicombined}) consists of gates $c_{p,q}$ calculating $\varphi_p \wedge \cdots \wedge \varphi_q$ and output gates $o_i$ for $i=1\ldots n$. The gate $c_{p,q}$ is present in $C$ if there is $i$ such that $i \leq p \leq q < \textrm{limit}(i)$. If $p=q$ then $l(c_{p,q}) = (\idc, \varphi_p)$. Otherwise, $l(c_{p,q}) = (\andc, c_{p,q-1}, c_{p+1,q})$.

For output, we set $o_i$ so that  $o_i= \bigwedge_{i\leq j<\textrm{limit}(i)}\varphi_j = (\varphi \U_I s)(i)$. If $\textrm{limit}(i)=\infty$ then $l(o_i)=\bot$, if $\textrm{limit}(i)=i$ then $l(o_i)=\top$ and else $l(o_i)=(\idc, c_{i,\textrm{limit}(i)-1})$.


The embedding $\gamma:C\to\rr^2$ of the circuit $C$ is the same as above, $\gamma(o_i) = (i, n)$, $\gamma(\varphi_i) = (i, 0)$ and $\gamma(c_{p,q})=(p, q-p+1)$. Since, $i < j$ implies $\textrm{limit}(i) \leq \textrm{limit}(j)$, the embedding is planar.

This finishes the construction of circuits for $\U_I$. Circuits for the dual operators of $\U_I$ are obtained either by dualising $\orc$ and $\andc$ gates (Release operator), by performing the construction backwards in time (Since operator) or both (Trigger operator). Therefore,

\begin{theorem}
$\mtl$ path checking is in $\acd$.
\end{theorem}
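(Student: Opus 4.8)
The plan is to reduce $\mtl$ path checking to the tree-contraction framework of~\cite{Kuhtz:2009} summarised in Section~\ref{section:contraction}. That framework already yields an $\acd$ algorithm for any temporal logic whose operators can be realised as \emph{upward stratified transducer circuits} (functions $\bb^n\to\bb^n$) that are closed under composition and whose composition and evaluation lie in $\logdcfl$. Since $\mtl$ differs from $\ltl$ only by attaching the timing constraint $I$ to each operator, the entire contraction machinery carries over verbatim \emph{provided} I can supply transducer circuits for the metric operators. Thus the proof reduces to the single task: for every interval $I$ and every fixed $s\in\bb^n$, exhibit upward stratified transducer circuits computing $x\mapsto (s\U_I x)$ and $x\mapsto(x\U_I s)$, together with analogous circuits for the duals $\since_I,\R_I,\T_I$.

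First I would fix the interval $I$ and the known vector $s$, and handle the two shapes of partial application separately, exactly as in the construction already laid out before the theorem. For $s\U_I\varphi$, I would compute, for each index $i$, the window $[L_i,R_i]$ with $L_i=\textrm{first}(i)$ and $R_i=\min(\textrm{last}(i),\textrm{seg}(i))$, so that $(s\U_I\varphi)(i)=\bigvee_{L_i\le j\le R_i}\varphi(j)$; the associativity gates $d_{p,q}=\varphi_p\vee\cdots\vee\varphi_q$ then realise every output as a single $\idc$ wire to $d_{L_i,R_i}$. For $x\U_I s$ I would dually compute $\textrm{limit}(i)$, the first index in $t_i+I$ where $s$ holds, giving $(\varphi\U_I s)(i)=\bigwedge_{i\le j<\textrm{limit}(i)}\varphi_j$, realised by conjunction gates $c_{p,q}$. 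In both cases I would invoke Lemma~\ref{lemma:circuit} to certify correctness of the output gates.

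The crux of the argument — and the step I expect to be the main obstacle — is \textbf{planarity} of these transducer circuits, since without it the $\logdcfl$ evaluation bound of~\cite{Chakraborty:2006,Barrington:1999} fails and the whole pipeline collapses to mere $\ptime$. The danger is that the future windows $[L_i,R_i]$ for different source indices overlap arbitrarily, forcing crossing wires. I would resolve this by proving the two monotonicity facts $L_i\le L_{i+1}$ and $R_i\le R_{i+1}$ (and, dually, $i<j\implies\textrm{limit}(i)\le\textrm{limit}(j)$), which follow because $t_i+I$ slides monotonically rightward with $i$ and because $\textrm{seg}$ is nondecreasing. Monotonicity rules out the forbidden nesting $L_i<L_j\le R_j<R_i$: windows may overlap but none is strictly contained in another, which is precisely what makes the embedding $\gamma(d_{p,q})=(p,\,q-p+1)$ planar and upward. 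I would also note that the gates are stratified by the quantity $q-p+1$ and that $L_i,R_i,\textrm{limit}(i)$ are $\logspace$-computable from the timestamps, so the circuits are uniform and of polynomial size.

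Finally I would dispatch the remaining operators by symmetry rather than by fresh construction: $\since_I$ is obtained by running the same $\U_I$ construction backwards in time (reflecting the index order), $\R_I$ by dualising the $\orc$/$\andc$ gate types in the $\U_I$ circuit, and $\T_I$ by applying both transformations, each of which manifestly preserves planarity, stratification, size, and $\logspace$-uniformity. Having furnished upward stratified transducer circuits for all $\mtl$ temporal operators and their duals, the closure-under-composition and $\logdcfl$-evaluation properties of transducer circuits feed directly into the tree-contraction algorithm of Section~\ref{section:contraction}, which then places $\mtl$ path checking in $\aconec$ with $C=\logdcfl$, i.e.\ in $\acd$, completing the proof.
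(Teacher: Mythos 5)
Your proposal is correct and follows essentially the same route as the paper's own proof: the same reduction to the tree-contraction framework via upward stratified transducer circuits, the same windows $[L_i,R_i]$ and $\mathrm{limit}(i)$ realised by the gates $d_{p,q}$ and $c_{p,q}$, the same embedding $\gamma(d_{p,q})=(p,q-p+1)$ with planarity secured by the monotonicity of $L_i$, $R_i$ and $\mathrm{limit}(i)$, and the same dualisation scheme for $\since_I$, $\R_I$ and $\T_I$. There is nothing substantively different to compare.
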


A considerably weaker result appeared in \cite{Kuhtz:2012}, where
the authors gave circuits and an $\acd$ algorithm only for a fragment
of $\mtl$ interpreted over traces with integral timestamps $t_i =
i$ and intervals of the form $[0, a]$ for $a \in \nn$. 

\section{$\utl$}

The most complicated circuits in the $\ltl$ path-checking algorithm~\cite{Kuhtz:2009} correspond to $s \U \psi$ and $\psi \U s$ formulae. As in the case of $\mtl$, the circuits are also not uniform but depend on $s$.
In this section, we devise an $\acc{1}$ tree-contraction algorithm for $\utl$---the fragment of $\ltl$ obtained by omitting binary temporal operators. The algorithm works even if XOR is allowed and is based on the analysis of functions arising in the tree contraction algorithm applied to $\utl$ formulae. First consider the future-only fragment of $\utl$.

Let $p\in\bb^n$ be any proposition. If $p(i) = \bot$ for every $i$ then $(\F p)(i)=\bot$ for every $i$. Otherwise, let $i$ be the largest index such that $p(i)=\top$. Then, $(\F p)(j) = \top$ for all $j\leq i$. By construction, $p(k) = \bot$ for all $k > i$. Hence, $(\F p)(k) = \bot$ for all $k>i$. Thus, $\F p$ is downward monotone and depends only on the largest $i$ with $p(i)=\top$. In particular, only $n+1$ possible values exist for $\F p$.

Similarly, let $t$ be the largest index such that $p(t) =\bot$. Then $p(j) = \top$ for all $j > t$. Hence $(\G p)(j) = \top$ for all $j > t$. Since $p(t) = \bot$ we have $(\G p)(k) = \bot$ for all $k \leq t$. Thus, $\G p$ is upward monotone and depends only on the largest $t$ with $p(t)=\bot$. In particular, only $n+1$ possible values exist for $\G p$.

So for any formula $\psi$ the value of $\F \circ \psi$ or $\G \circ \psi$ is a monotone vector---of which there are only $2n$ many. Hence for any \fcontext $\varphi(X)$, the \fcontexs $\varphi \circ (\F X)$ and $\varphi \circ (\G X)$ can be represented as $g \circ F$ or $g \circ G$ where $g : \mm \to \bb^n$ is a \defemph{function with monotone domain}. Since $|\mm|=O(n)$, enumerating all outputs of $g$ explicitly requires only $|g| = O(n^2)$ space. Similar results  hold for the past equivalents of $\G$ and $\F$.


Now, Boolean operators are applied componentwise and obey the usual identities: $\bot \wedge p = \bot, \top \wedge p = p, \bot \vee p = p, \top \vee p = \top, \bot \oplus p = p$ and $\top \oplus p = \neg p$. Therefore, to represent partial evaluation of conjunction ($p \wedge X, x \wedge X$), disjunction ($p \vee X, X \vee p$) and xor ($p \oplus X, X \oplus p$) it suffices to keep track whether each component is $\bot, \top$ or equal to the original or the negation of the value in~$X$.


Furthermore, Next ($\X p$) and Yesterday ($\Y p$) temporal operators shift $p$ by $1$ and $-1$, respectively. Let $m$ be the size of the input formula. The last two paragraphs motivate the definition of filters: let $v \in \{\bot, \top, \idc, \notc\} ^ n$ and $k \in [-m, m]$ satisfy $v(i) \in \bb$ if $i + k \not\in \{1, \ldots, n\}$. Then a \defemph{filter with offset $k$ and pattern $v$} is the function $f_{v, k} : \bb^n\to\bb^n$ such that

$$
f_{v,k}(p)(i) = \left\{
  \begin{array}{l l}
    \bot        & \quad \text{if $v(i) = \bot$} \\
    \top        & \quad \text{if $v(i) = \top$} \\
    p(i+k)      & \quad \text{if $v(i) = \idc$} \\
    \neg p(i+k) & \quad \text{if $v(i) = \notc$} \\
  \end{array} \right.
$$  

\nop{For each component an offset keeps track of whether it is $\bot, \top$, negated or the original value. }The identity function as well as the partial evaluation of conjunction, disjunction, and xor are expressible as filters with offset $0$. 
Temporal operators Next and Yesterday are identity filters with offsets $1$ and $-1$, respectively. Note that filters are closed under composition.

Storing $v$ explicitly and $k$ in unary requires $O(n + |\varphi|)$ bits per filter.\nop{ If the offset is stored in unary and $2$ bits are used for each entry of $v$ then $2n + 2|\varphi|+1$ bits are needed to represent the filter $f_{v,k}$. } By fully expanding the definition, we can evaluate and compose two filters in $\acc{0}$. Moreover, if $g:\mm\to\bb^n$ is a function with monotone domain then $(f_{v,k} \circ g): \mm\to\bb^n$ is also a function with monotone domain and the composition in~$\acc{0}$.

\begin{lemma}
\label{lemma:utlcomposition}
There are uniform $\acc{0}$ circuits calculating $f_{v, k} \circ f_{v',k'}$ and $f_{v,k}(p)(i)$ and $f_{v,k} \circ g$ and $\F \circ g$ and $\G \circ g$, where $f$'s are filters and $g$ is a function with monotone domain.
\end{lemma}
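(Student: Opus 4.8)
The plan is to verify each of the five claimed $\acc{0}$ circuits by fully expanding the relevant definitions and observing that each output bit depends on only a constant-depth, unbounded-fan-in Boolean combination of the inputs (plus unary-encoded offsets), so that each can be computed by a uniform constant-depth circuit. All five constructions are local in the sense that the $i$-th output bit is determined by a bounded amount of information, after an initial (logspace-uniform, hence $\acc{0}$-computable) indexing step that selects which input position to read.

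\textbf{The two filter operations.} For $f_{v,k}\circ f_{v',k'}$, I would expand the composition componentwise. By the definition of a filter, $(f_{v,k}\circ f_{v',k'})(p)(i)$ equals $v(i)$ itself when $v(i)\in\{\bot,\top\}$, and otherwise equals $f_{v',k'}(p)(i+k)$ possibly negated according to whether $v(i)=\idc$ or $v(i)=\notc$. Expanding the inner filter once more shows the result is again a filter $f_{w,k+k'}$ whose pattern $w(i)$ is a fixed function of the pair $\bigl(v(i),\,v'(i+k)\bigr)$ (with the convention that out-of-range positions force a Boolean constant, ensuring $w(i)\in\bb$ where required). Since offsets are stored in unary, computing $k+k'$ is addition in unary, which is in $\acc{0}$; and each $w(i)$ is a constant-size lookup from the two symbols $v(i)$ and $v'(i+k)$, selectable by the unary offset $k$. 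Hence a uniform $\acc{0}$ circuit suffices. Evaluating $f_{v,k}(p)(i)$ is even simpler: the output is one of $\bot$, $\top$, $p(i+k)$, or $\neg p(i+k)$, a constant-size selection driven by the two-bit encoding of $v(i)$, with the read position $i+k$ located via the unary offset.

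\textbf{The three compositions with a monotone-domain function.} For $f_{v,k}\circ g$, recall that $g:\mm\to\bb^n$ is represented by its explicit table of outputs over the $O(n)$ monotone vectors in $\mm$. Composing on the left with a filter simply applies $f_{v,k}$ to each of these $O(n)$ stored output vectors, and by the filter-evaluation case this is $O(n)$ independent $\acc{0}$ computations performed in parallel, yielding another monotone-domain function table in $\acc{0}$. For $\F\circ g$ and $\G\circ g$, I would use the structural facts established just before the lemma: $\F p$ is downward monotone and depends only on the largest index $i$ with $p(i)=\top$, while $\G p$ is upward monotone and depends only on the largest index with $p(i)=\bot$; in both cases there are only $n+1$ possible output vectors. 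Thus $\F\circ g$ and $\G\circ g$ are again functions with monotone domain, and building their tables amounts to, for each of the $O(n)$ inputs in $\mm$, reading off the stored vector $g(\cdot)$, locating (by an unbounded-fan-in $\orc$ over threshold comparisons) the relevant extreme index, and emitting the corresponding monotone output vector. Each such step is a constant-depth unbounded-fan-in computation, so the whole table is produced in $\acc{0}$.

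\textbf{The main obstacle} I anticipate is bookkeeping rather than depth: I must check that all index arithmetic and range tests ($i+k\in\{1,\dots,n\}$, ``largest $i$ with $p(i)=\top$'', the comparisons implicit in monotonicity) are genuinely constant-depth. The key enabling choice is that offsets are stored in unary and that finding an extremal satisfied index is an unbounded-fan-in $\orc$ of pairwise comparisons, each of which is $\acc{0}$; there is no iterated carry or prefix computation that would force logarithmic depth. I would also remark that uniformity is immediate since every gate's type and wiring is determined by simple arithmetic predicates on the indices, computable by a logspace machine. With these observations, each of the five circuit families is uniform $\acc{0}$, completing the proof.
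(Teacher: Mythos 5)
Your proposal is correct and follows essentially the same approach as the paper: expand the filter definition componentwise, use the unary (one-hot) offset encoding to select the read position via an unbounded fan-in $\orc$, and process the $O(n)$ table entries of a monotone-domain function in parallel. The paper's own proof only writes out the $f_{v,k}\circ g$ case (with the remark that the others are similar), so your treatment is the same argument carried out more completely, including the explicit verification that filter composition yields a filter $f_{w,k+k'}$.
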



We represent the functions arising in the tree-contraction algorithm as follows. If the contracted subtree $S$ does not contain $\F$ or $\G$ operators then it is representable by a filter. If it contains $\F$ or $\G$ then let $T$ be the first such occurrence. Then the segment from the leaves to $T$ is representable by a filter and the segment above $T$ is representable by a function with monotone domain. Thus, the function $h$ associated with $S$ can be represented as:
$$
h = \left\{
  \begin{array}{l l}
    \textrm{filter}        & \quad \text{no temporal operator} \\
    f \circ T \circ \textrm{filter}        & \quad \text{$T$ is the first temporal operator; $f: \mm \to \bb^n$} \\
  \end{array} \right.
$$  
\nop{That is, the function is either a filter or a filter followed by a unary temporal operator and a function with monotone domain. The choice corresponds to the presence of a temporal operator in the contracted subtree. Let $h$ be a function of this form. }
Now, if the contracted node is a Boolean connective, $\X$ or $\Y$ then we calculate $f_{v,k} \circ h$ for an appropriate filter. If the contracted node is $\F$ or $\G$ then we calculate $\F \circ h$ or $\G \circ h$. In either case, the resulting function is representable using the above format\nop{ of Fig.~\ref{fig:utlfunction}}. Moreover, by Lemma~\ref{lemma:utlcomposition}, the composition is in $\acc{0}$. Hence, the complexity of the tree contraction algorithm is $\acc{1}[\acc{0}]=\acc{1}$.
\nop{
\begin{figure}
\centering
\begin{tikzpicture}[->,>=stealth',bend angle=30,auto,scale=0.6]

\tikzstyle{every state}=[inner ysep=-14mm, rectangle, outer sep=0mm, transform shape, minimum width=30mm]

  \node (s0) at (0,0) [state, initial below, initial text=]{filter};
  \node (s1) at (2,1.5) [state]{id};
  \node (s2) at (-2,1.5) [state]{$\F / \G / \A / \V$};
  \node (s3) at (-2,3) [state]{$f:\mm\to\bb^n$};
  \node (s4) at (-2,4.5) [state, draw=none]{};

\path
  (s0) edge [] node [above] {} (s1)
  (s0) edge [] node [above] {} (s2)
  (s2) edge [] node [above] {} (s3)
  (s3) edge [] node [above] {} (s4)
  ;
\end{tikzpicture}
\caption{General form of a function arising in the $\utl$ contraction algorithm}
\label{fig:utlfunction}
\centering
\end{figure}
}
\begin{theorem}
$\utl$ path checking is in $\acc{1}$.
\end{theorem}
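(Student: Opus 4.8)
The plan is to instantiate the generic tree-contraction framework of Section~\ref{section:contraction} on the parse tree of the input \utl formula, taking the constants to be $\mathcal{C}=\bb^n$ (the truth vectors of subformulae) and the admissible functions $\mathcal{F}$ to be the class of functions that can arise during contraction. By the master statement recalled in Section~\ref{section:contraction}, if every contraction step lies in a class $C$ then the whole algorithm lies in \aconec; so the entire argument reduces to exhibiting a succinct normal form for $\mathcal{F}$ that is closed under the partial evaluations performed at a contraction step, and to checking that each such step is in $\acc{0}$, which then gives $\acc{1}[\acc{0}]=\acc{1}$.

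For the normal form, the decisive fact---already isolated above---is that $\F p$ and $\G p$ (and their past duals) are always monotone, so they range over only the $O(n)$ vectors of $\mm$. I would therefore represent every $h\in\mathcal{F}$ as either a single filter $f_{v,k}$, or a composite $g\circ T\circ f_{v,k}$ in which $f_{v,k}$ is a filter, $T\in\{\F,\G\}$ (or a past dual) is the first temporal operator encountered from the leaves, and $g:\mm\to\bb^n$ is a function with monotone domain stored explicitly as a table of size $O(n^2)$. Closure is then a short case analysis on the operator at the contracted node $p$: Boolean connectives and $\X,\Y$ act as filters (offset $0$, resp.\ $\pm1$), and since filters compose and $f_{v,k}\circ g$ is again a function with monotone domain, pre-composing a filter preserves the form; a temporal node produces $\F\circ h$ or $\G\circ h$, whose output lies in $\mm$, so the top piece becomes a (new) function with monotone domain and the three-part form is restored regardless of whether $h$ was a bare filter or already composite.

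That each contraction step is in $\acc{0}$ is exactly what Lemma~\ref{lemma:utlcomposition} supplies: building the step's partial-evaluation function from the known constant $c_l$ yields a filter, and the operations $f_{v,k}\circ f_{v',k'}$, $f_{v,k}(p)(i)$, $f_{v,k}\circ g$, and $\F\circ g$, $\G\circ g$ together cover every primitive needed to form $f''=f_p\circ f'$ and then apply it to $c_s$ or compose it with $f_s$. The main obstacle---and the reason the normal form is designed precisely this way---is the tension between closure under post-composition by $\F$/$\G$ and keeping each function representable in $\mathrm{poly}(n)$ space: an arbitrary map $\bb^n\to\bb^n$ is far too large to tabulate, but the moment a temporal operator is applied the domain collapses to $\mm$, so an explicit $O(n^2)$ table becomes affordable and any subsequent filter can be folded into that table (or into the leading filter) without escaping the class.

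Assembling the pieces, the tree-contraction scheduler of~\cite{Karp:1990} performs $\lceil\log n\rceil$ rounds of $\acc{0}$ steps, hence runs in $\acc{1}$; the constant left at the final node is $(\varphi(1),\ldots,\varphi(n))$, whose first coordinate decides path checking. The past-only and mixed fragments are handled verbatim through the symmetric past duals, which the normal form already admits, and since $\oplus$ is itself expressible as a filter the argument goes through unchanged when XOR is added to the logic.
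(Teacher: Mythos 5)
Your proposal follows essentially the same route as the paper's own argument: the same monotonicity observation collapsing $\F$/$\G$ outputs into $\mm$, the same filter representation for Boolean connectives and $\X,\Y$, the same two-part normal form (filter, or $g \circ T \circ \textrm{filter}$ with $g:\mm\to\bb^n$ tabulated in $O(n^2)$ space), closure verified by the same case analysis, and the same appeal to Lemma~\ref{lemma:utlcomposition} to place each contraction step in $\acc{0}$, yielding $\acc{1}[\acc{0}]=\acc{1}$. The argument is correct as stated, including the handling of past duals and XOR.
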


Same results apply to past temporal operators. Note that the construction works also when the negation is applied to arbitrary subformulae, and not only to propositions. Also note that $F_{[a,\infty)} p$ is downward monotone and the corresponding circuits are constructible in logarithmic space. Therefore, the above arguments apply to the more powerful logic $\utl_{\geq}$ obtained by allowing $F_{[a,\infty)}$ and $G_{[b,\infty)}$ operators.
To the best of our knowledge, $\utl_\geq$ is the most expressive
and powerful fragment of $\ltl$ with a sub-$\acd$ path-checking
problem.

\section{Conclusion}


The results obtained in this paper shed further light on the
complexity landscape of temporal-logic path-checking problems.
Several open questions however remain, the main one being to determine
the precise complexity of $\ltl$ path checking. In particular, there
has been no progress on the trivial $\ncc{1}$ lower bound over the
past ten years.
Furthermore, might it be possible to separate the complexity of $\ltl$ and
$\mtl$, or of these logics and their future-only fragment?

\textbf{Acknowledgments.} This research was financially supported by EPSRC.
\bibliographystyle{plain}
\bibliography{bib}

\newpage

\appendix

\section{Proof of Lemma \ref{lemma:kijs}}
\begin{proof}

Note that, due to planarity of the underlying circuit, each $\aij$ depends on the consecutive block of gates one layer below and that $\aij$ and $\alpha_{i,j+1}$ share at most one predecessor.

\begin{compactitem}
\item {\boldmath $k_{i,j-1}<k_{i,j}$}
Note that for any $i,j$, the rightmost predecessor (successor) of $\alpha_{i,j}$ is always to the right in the planar embedding of to the rightmost predecessor (successor) of $\alpha_{i,j-1}$.  Let $\pi$ be the path constructed by taking the rightmost wires starting from $\alpha_{i,j}$. In particular, the wire from $\alpha_{i,j-1}$ to its rightmost predecessor is strictly to the left of $\pi$. Thus, $k_{i,j-1} < k_{i,j}$ as necessary. 

\item \textbf{\boldmath $k_{i,n_i}=k_{j,n_j} \leq m$ for every $i$ and $j$.} Since the rightmost gate can only be a predecessor of the rightmost gate one layer higher, it follows that the path $\pi$ is equal for all rightmost gates in all layers.

\item \textbf{\boldmath for every $i$ and $j=1,\ldots,n_i$ the intervals $v(i,j)$'s partition $[1,k_{i,n_i}]$.} By the first part, all $v(i,j)$'s in the same layer are disjoint. By construction, they cover $[1, k_{i,n_i}]$ entirely.

\item \textbf{\boldmath if $\alpha_{i+1,j}$ depends on $\alpha_{i,p},\alpha_{i,p+1},\ldots,\alpha_{i,q}$ then $$v(i+1,j)\subseteq \bigcup_{r=p,\ldots,q} v(i,r)$$ and $v(i+1,j)$ overlaps with each $v(i,r)$ for $p\leq r\leq q$.}
Construct the paths by taking the rightmost wires from $\alpha_{i+1,j-1}, \alpha_{i+1,j}, \alpha_{i,p}$ and $\alpha_{i,q}$ and denote them by $\pi_{j-1}, \pi, \pi_p$ and $\pi_q$ respectively. Notice that $\pi$ and $\pi_p$ coincide on the layers above $i+1$ and that $\pi$ and $\pi_q$ coincide on the layers below $i$. On the other hand, $\pi_q$ is always to the right of $\pi_p$. Thus, $k_{i,p} < k_{i+1,j} \leq k_{i,q}$. Also, $\pi_{j-1}$ is always to the left of $\pi_p$. Thus $k_{i+1,j-1} < k_{i,p}$ as required.

\item \textbf{each $k_{i,j}$ can be computed in $\logspace$.}
The algorithm needs to keep track of one counter: number of wires strictly to the left, and the current gate. By traversing the list of wires, it is easy to calculate the rightmost wire going up (down). By another pass through the list of wires, the counter is incremented by the wires to the left in the current layer. 

\end{compactitem}
\qed
\end{proof}

\section{Proof of Lemma \ref{lemma:phi}}

\begin{lemma}
\label{lemma:phiijs}
Fix $i>0, j$ and let $p\in\bb^n$ be any proposition that agrees with $r_{i-1}$ on $v(i,j)$. Then $\varphi_{i,j}(p)$ is constant on $v(i,j)$ and equals the output of $\alpha_{i,j}$ and $\varphi_{i,j}(p)$ is equal to $p$ elsewhere.
\end{lemma}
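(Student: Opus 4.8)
The plan is to prove the statement by case analysis on the type $\tau$ of the gate $\alpha_{i,j}$, after first recording the structural consequence of the partitioning. Write $v(i,j)=[l,r]$. The crucial preliminary observation is that, by the overlap property recorded in Lemma~\ref{lemma:kijs}, if $\alpha_{i,j}$ depends on $\alpha_{i-1,a},\ldots,\alpha_{i-1,b}$ then $[l,r]\subseteq\bigcup_{s=a}^{b}v(i-1,s)$ and $[l,r]$ meets each $v(i-1,s)$. Since $r_{i-1}$ is constant on each $v(i-1,s)$, equal to the output of $\alpha_{i-1,s}$, and since $p$ agrees with $r_{i-1}$ on $[l,r]$ by hypothesis, the restriction of $p$ to $[l,r]$ reads off the predecessor outputs, each occurring at least once. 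In particular $\{\,p(k) : l\le k\le r\,\}$ is exactly the set of predecessor outputs; this is what lets me identify a disjunction (resp.\ conjunction) of $p$ over the block with the output of an $\orc$ (resp.\ $\andc$) gate.

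For the three easy types I would evaluate the context directly. For $\tau=\idc$ the single predecessor forces $[l,r]\subseteq v(i-1,a)$, so $p$ is already constant on $[l,r]$ equal to that predecessor's output, and $\varphi_{i,j}(X)=X$ changes nothing. For $\tau=\onec$ (resp.\ $\zeroc$) the context $\chi_{l,r}\vee X$ (resp.\ $(\neg\chi_{l,r})\wedge X$) is forced to $\top$ (resp.\ $\bot$) exactly on $[l,r]$ and equals $X$ off $[l,r]$, matching the constant output of the gate.

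The heart of the argument, and the main obstacle, is the $\orc$ case $\varphi_{i,j}(X)=\chi_{l+1,r}\since(\chi_{l,r-1}\U X)$, where I must unwind the nested $\U$ and $\since$ and control their behaviour precisely at the block boundaries. Writing $u=\chi_{l,r-1}\U p$, I would read off from the semantics that $u(k)=p(k)$ for $k\notin[l,r-1]$, while $u(k)=\bigvee_{k\le h\le r}p(h)$ for $l\le k\le r-1$; the point is that the left argument $\chi_{l,r-1}$ is true only up to $r-1$, so the $\U$ at any block position reaches exactly as far as $r$ and no further, whereas using $\chi_{l,r}$ would let the $\U$ at position $r$ illegitimately read $p(r+1)$ and leak information from outside the block. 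Dually, $\chi_{l+1,r}\since(-)$ propagates the value $u(l)=\bigvee_{l\le h\le r}p(h)$ forward across the whole block: for $l\le k\le r$ one gets $(\chi_{l+1,r}\since u)(k)=\bigvee_{l\le h\le k}u(h)=u(l)$, since $u(l)$ is the full block-disjunction and dominates every other $u(h)$, and since $\chi_{l+1,r}$ is false at $l$ the $\since$ at position $l$ cannot reach index $l-1$, leaving everything outside $[l,r]$ untouched. Thus $\varphi_{i,j}(p)$ is constant on $[l,r]$ equal to $\bigvee_{l\le h\le r}p(h)$, which by the preliminary observation is the disjunction of the predecessor outputs, i.e.\ the output of the $\orc$ gate, and equals $p$ off $[l,r]$. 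I would dispatch the degenerate case $l=r$ separately: the overlap constraint together with disjointness of same-layer blocks forces a single predecessor, both $\chi_{l,r-1}$ and $\chi_{l+1,r}$ are empty, and the context collapses to the identity. The $\andc$ case is entirely dual: replacing $(\orc,\U,\since)$ by $(\andc,\R,\T)$ and disjunction by conjunction, the same boundary analysis shows $\chi_{l+1,r}\T(\chi_{l,r-1}\R X)$ forces the constant value $\bigwedge_{l\le h\le r}p(h)$ on the block, matching the $\andc$ gate output, and fixes everything else.
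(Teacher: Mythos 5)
Your proof follows essentially the same route as the paper's: a case analysis on the gate type that unwinds the $\U$/$\since$ semantics over the block $[l,r]$, showing that the block value is forced to $\bigvee_{l\le h\le r}p(h)$ while positions outside $[l,r]$ are untouched. You are in fact more explicit than the paper in two places where it is silent: the paper's proof stops at the identity $\varphi_{i,j}(p)=p(l)\vee\cdots\vee p(r)$ on the block and never spells out why this disjunction equals the output of $\alpha_{i,j}$ (your preliminary observation via the overlap property of Lemma~\ref{lemma:kijs}), and it dismisses the $\idc$ case as trivial, whereas you note that constancy of $p$ on the block needs the single-predecessor containment.

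One caveat, which applies equally to the paper's own proof (whose entire $\andc$ case reads ``Dual to the above''): your closing claim that the same boundary analysis shows $\chi_{l+1,r}\T(\chi_{l,r-1}\R X)$ forces $\bigwedge_{l\le h\le r}p(h)$ on the block is not literally true under the paper's semantics $\varphi\R\psi=\neg(\neg\varphi\U\neg\psi)$. Unfolding gives $(\varphi\R\psi)(i)=\psi(i)\wedge\bigl(\varphi(i)\vee(\varphi\R\psi)(i+1)\bigr)$, so a Release propagates its conjunction where its first argument is \emph{false} and stops where it is \emph{true} --- the polarity is opposite to that of $\U$. With $\chi_{l,r-1}$ as first argument, the Release therefore acts as the identity on $[l,r-1]$ and computes suffix conjunctions from position $r$ onwards, so neither conclusion of the lemma survives. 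The genuinely dual context uses complemented block indicators, $(\neg\chi_{l+1,r})\T\bigl((\neg\chi_{l,r-1})\R X\bigr)$, and with that correction your boundary analysis dualises exactly as you describe. So on the $\andc$ case your write-up and the paper share the same gap; it originates in the formula of Section~\ref{section:reduction} rather than in your argument, but since you assert that the boundary analysis was carried through, this is the one step of your proposal that would actually fail if executed as written.
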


\begin{proof}
\textbf{\boldmath Case $\tau = \onec$}
Since $\bot \vee x = x$ and $\top \vee x = \top$, it holds that
$$(\chi_{l,r} \vee p)(i) = \left\{
  \begin{array}{l l}
    \top \vee p(i) = \top       & \quad \text{if $i \leq l \leq r$} \\
    \bot \vee p(i) = p(i)       & \quad \text{otherwise} \\
  \end{array} \right.
$$

\textbf{\boldmath Case $\tau = \zeroc$}
Similar to above, using $\top \wedge x = x$ and $\bot \wedge x = \bot$.

\textbf{\boldmath Case $\tau = \idc$}
Trivial as $\varphi_{i,j}$ is the identity \fcontext.

\textbf{\boldmath Case $\tau = \orc$}
Using the equality $r \U s = s \vee (r \wedge \X (r \U s))$, it follows that $(\chi_{l,r-1}\U p)(i) = p(i)$ for $i \not\in [l,r-1]$ as $\chi_{l,r-1}(i) = \bot$. In particular, $(\chi_{l,r-1}\U p)(r) = p(r)$. By induction, one can easily show that $(\chi_{l,r-1}\U p)(j) = p(j) \vee p(j+1) \vee \cdots \vee p(r)$ for $j \in [l,r]$.

Similarly, for any proposition $s$, it holds that
$$(\chi_{l+1,r} \since s)(j) = \left\{
  \begin{array}{l l}
    s(l) \vee s(l+1) \vee \cdots \vee s(j)   & \quad \text{if $j \in [l,r]$} \\
    s(j)                                     & \quad \text{otherwise} \\
  \end{array} \right.
$$

Putting the above equalities together, we obtain that
$$\varphi_{i,j}(p)(k)(\chi_{l+1,r} \since (\chi_{l,r-1} \U p))(k) = \left\{
  \begin{array}{l l}
    p(l) \vee p(l+1) \vee \cdots \vee p(r)   & \quad \text{if $k \in [l,r]$} \\
    p(k)                                     & \quad \text{otherwise} \\
  \end{array} \right.
$$

\textbf{Case $\tau = \andc$}
Dual to the above.\qed
\end{proof}

\begin{proof}[of Lemma \ref{lemma:phi}]
Let $\gamma_j = \varphi_{i,j} \circ \varphi_{i,2} \circ \cdots \circ \varphi_{i,n_i}$. We shall prove by downward induction that $\gamma_j(r_{i-1})$ equals $r_i$ on $v(i,j)\cup\cdots\cup v(i,n_i)$ and $r_{i-1}$ elsewhere.

\textbf{\boldmath Case $j = n_i$}
Immediate from Lemma~\ref{lemma:phiijs} applied to $\varphi_{i,n_i}$. 

\textbf{\boldmath Case $j < n_i$}
Let $\pi = \gamma_{j+1}(r_{i-1})$. Now, $v(i,t)$'s partition the trace, Lemma~\ref{lemma:kijs}, and, Lemma~\ref{lemma:phiijs}, the value of $\varphi_{i,j}$ on $v(i,j)$ depends only on the trace segment $v(i,j)$ and all other trace values are left unaffected. By the induction hypothesis, $\pi$ agrees with $r_{i-1}$ on $v(i,j)$. Thus, $\varphi_{i,j}(\pi)$ equals $r_i$ on $v(i,j)$ and $\pi$ everywhere else. By the induction, $\varphi_{i,j}(\pi)$ equals $r_i$ on $v(i,j)\cup\cdots\cup v(i,n_i)$ and $r_{i-1}$ elsewhere as required.

Finally, by induction, it is easily seen that $\varphi = (\psi_k \circ \psi_{k-1} \circ \cdots \circ \psi_1)(r_0) = r_n$. Thus $r_0\models\varphi$ if and only if $r_n(1)=\top$ which is precisely when the output gate of the circuit evaluates to true.\qed
\end{proof}

\section{Proof of Lemma \ref{lemma:circuit}}
One can easily show by induction on $q-p$ that $d_{p,q} = \varphi_p \vee \cdots \vee \varphi_q$. As noted in the main text,$$\pi,i \models s \U_I \varphi \text{ if and only if }\bigvee_{L_i\leq j\leq R_i} \pi,j\models \varphi \text{ is true.}$$
The result now immediately follows.

\section{Proof of Lemma \ref{lemma:utlcomposition}}
\begin{proof}
We show only how to compose $g:\mm\to\bb^n$ and a filter $f$ in $\acc{0}$. Other cases can be treated similarly. Let $h = f \circ g$ and for every admissible $k$ let $f_k$ be a Boolean variable which is true if and only if $f$ has offset $k$. Then for any input $v\in\mm$ and index $1\leq i\leq n$. The value of $h(v)(i) = (f(g(v)))(i)$. Now, depending on the offset and the pattern $p$ of $f$, the function $h(v)(i)$ evaluates to true if and only if

$$\bigvee_{k,1\leq i+k \leq n} (f_k \wedge ( ((p(i+k) = \top) \vee (p(i+k) = \idc \wedge g(i+k)) \vee (p(i+k) = \notc \wedge \neg g(i+k)))$$

is true where $p(i+k) = \top, \idc, \notc$ is a shorthand for the check that the bits encoding $p(i+k)$ encode the particular pattern. This is a constant depth polynomial size formula, thus a circuit, as necessary.\qed
\end{proof}

\end{document}